\newcommand{\Hi}[1]{$\mathscr{HI}$}
\newcommand{\Lo}[1]{$\mathscr{LO}$}
\newcommand{\my}[1]{{\color{blue}{#1}}}
\newcommand\myeq{\mathrel{\overset{\makebox[0pt]{\mbox{\normalfont\tiny\sffamily def}}}{=}}}
\begin{document}

\title{Multi-Rate Fluid Scheduling of Mixed-Criticality Systems on Multiprocessors
\thanks{This research was funded in part by the Ministry of Education, Singapore, Tier-1 grant RG21/13 and Tier-2 grant ARC9/14, and by the Start-Up-Grant from SCSE, NTU, Singapore.  This research was also partly supported by Basic Science Research Program of the National Research Foundation of Korea (NRF- 2015R1D1A1A01057018).}
}

\author{Saravanan Ramanathan \and Arvind Easwaran \and Hyeonjoong Cho}


\institute{Saravanan Ramanathan \and Arvind Easwaran\at
              School of Computer Science and Engineering, Nanyang Technological University, Singapore\\
              \email{\{saravana016,arvinde\}@e.ntu.edu.sg}
           \and
           Hyeonjoong Cho\at
           Department of Computer and Information Science, Korea University\\
           \email{raycho@korea.ac.kr}         
}


\maketitle

\begin{abstract}
In this paper we consider the problem of mixed-criticality (MC) scheduling of implicit-deadline sporadic task systems on a homogenous multiprocessor platform. Focusing on dual-criticality systems, algorithms based on the fluid scheduling model have been proposed in the past. These algorithms use a dual-rate execution model for each high-criticality task depending on the system mode. Once the system switches to the high-criticality mode, the execution rates of such tasks are increased to meet their increased demand. Although these algorithms are speed-up optimal, they are unable to schedule several feasible dual-criticality task systems. This is because a single fixed execution rate for each high-criticality task after the mode switch is not efficient to handle the high variability in demand during the transition period immediately following the mode switch. This demand variability exists as long as the carry-over jobs of high-criticality tasks, that is jobs released before the mode switch, have not completed. Addressing this shortcoming, we propose a multi-rate fluid execution model for dual-criticality task systems in this paper. Under this model, high-criticality tasks are allocated varying execution rates in the transition period after the mode switch to efficiently handle the demand variability. We derive a sufficient schedulability test for the proposed model and show its dominance over the dual-rate fluid execution model. Further, we also present a speed-up optimal rate assignment strategy for the multi-rate model, and experimentally show that the proposed model outperforms all the existing MC scheduling algorithms with known speed-up bounds.
\keywords{Mixed-Criticality \and Implicit-deadline sporadic tasks \and Multiprocessors \and Fluid scheduling }
\end{abstract}

\section{Introduction}
\label{sec:introduction}

The mixed-criticality (MC) model proposed by Vestal \cite{vestal} has received a lot of attention in the literature on real-time scheduling. Several studies exist on the design of multiprocessor MC scheduling algorithms; see~\cite{burns_review} for a review. To evaluate the schedulability performance of these algorithms two techniques are generally used: 1) \emph{experimental evaluation} in which the schedulability is assessed over a wide variety of task systems, and 2) analytical performance bounds such as the \emph{speed-up bound}\footnote{The speed-up bound of a scheduling algorithm is defined as the maximum additional processor speed required to schedule any feasible task system using the algorithm~\cite{kalyanasundaram_speedup}.} are derived. Scheduling algorithms with high schedulability in experimental evaluation as well as good (low) speed-up bound are highly desirable. 

Based on the above motivation, fluid scheduling algorithms have recently been proposed for MC scheduling of task systems on homogenous multiprocessor platforms~\cite{baruah_mcf,lee_mcfluid}. For scheduling implicit-deadline MC task systems with two criticality levels (dual-criticality systems), Lee et al. \cite{lee_mcfluid} proposed a dual-rate fluid scheduling model in which each high-criticality task executes using two rates and each low-criticality task executes using a single rate. The high-criticality task rates depend on the mode in which the system is operating; they execute using higher rates once the system switches to the high-criticality mode. Note low-criticality tasks are suspended after the system switches to the high-criticality mode. A rate assignment strategy called MC-Fluid has also been proposed, and it is shown to be \emph{dual-rate optimal}; if there is a feasible dual-rate assignment for a dual-criticality task system, then MC-Fluid is guaranteed to find it. Subsequently, Baruah et al. \cite{baruah_mcf} derived a simplified rate assignment strategy called MCF for dual-rate fluid scheduling, and showed that both MC-Fluid and MCF have an optimal speed-up bound of $4/3$.

Although algorithms such as MC-Fluid are dual-rate and speed-up optimal, there are several feasible dual-criticality task systems that are not schedulable under the dual-rate fluid scheduling model. In fact, in experimental evaluations it has been observed that the performance of MC-Fluid and MCF drop significantly for task systems with utilization greater than $3/4$~\cite{ramanathan_wmc}. The primary reason for this under-performance is the fact that a single fixed execution rate for each high-criticality task after the mode switch is not efficient to handle the high variance in demand during the transition period immediately following the mode switch. This demand variance arises due to the carry-over jobs of high-criticality tasks, that is jobs released before the mode switch but incomplete at the time of mode switch. Since the execution of such jobs is affected by low-criticality tasks, their left-over demand after mode switch may be different (usually higher) than the demand of pure high-criticality jobs that are released after the mode switch. Therefore, in the transition period while such carry-over jobs are executing, the overall demand of high-criticality tasks is varying depending on the number of pending carry-over jobs.

To address the aforementioned limitation of the dual-rate fluid scheduling model, in this paper we propose a \emph{multi-rate fluid scheduling model} for dual-criticality task systems scheduled on homogenous multiprocessor platforms. Under this model, each high-criticality task executes using a single execution rate in the low-criticality mode and a set of execution rates in the high-criticality mode. In particular, the task executes using different rates over time until all carry-over jobs are guaranteed to be completed, and the pure high-criticality jobs released subsequently execute using a single rate based on task utilization. These rates and the time duration for which they are applicable are all determined offline so as to ensure worst-case schedulability. Similar to the dual-rate model, each low-criticality task uses a single execution rate based on task utilization before mode switch and is suspended after the mode switch. Thus, by using a fine-grained rate allocation in the transition period, the multi-rate model is able to accommodate a higher left-over demand for the carry-over jobs after the mode switch. This in turn enables it to accommodate a higher demand in the low-criticality mode, thus improving schedulability over the dual-rate model. We now illustrate this benefit of the multi-rate model using a simple example.

\begin{example}
In Table~\ref{tab:example} we show a dual-criticality task system that is schedulable under the multi-rate model on $2$ processors, but is not schedulable under the dual-rate model. Under the dual-rate model, both the carry-over job and the pure high-criticality jobs of a high-criticality task execute using a single rate. As can be seen, using the dual-rate model, the resulting execution rates in low-criticality mode are not feasible ($>2$). Since these rates have been obtained using the dual-rate optimal MC-Fluid algorithm, we can conclude that the task system is not schedulable under the dual-rate model. One may wonder if a slight modification of the dual-rate model (two execution rates after the mode switch), in which only the carry-over job uses the execution rate determined by MC-Fluid and the pure high-criticality jobs use another execution rate based on task utilization, does any better. However, this does not help to improve schedulability, because the execution rates of carry-over jobs in fact determine the rate allocation for the low-criticality mode. In the case of the multi-rate model, each high-criticality task is assigned multiple execution rates in the high-criticality mode as shown in the table. Immediately after entering the high-criticality mode in window $w_1$, while tasks $\tau_1$ and $\tau_2$ execute at the maximum rate of $1$, $\tau_3$ does not execute at all. Later on in window $w_3$, when $\tau_1$ and $\tau_2$ are executing at a much lower rate equal to their utilization, $\tau_3$ is able to execute at a rate of $0.5$. This is even higher than the rate it was allocated in the dual-rate model ($0.36$). Thus, the average rate allocated to $\tau_3$ across windows $w_1, w_2$ and $w_3$ in the multi-rate model is higher than the rate allocated to it in the dual-rate model. As a result its rate in low-criticality mode is reduced, and the overall task system becomes schedulable (schedulability is formally verified in Sect.~\ref{sec:test}).
\end{example}

\begin{table}[]
	\centering
	\setlength{\extrarowheight}{2.5pt}
	\resizebox{\textwidth}{!}{
	\begin{tabular}{|l|l|l|l|l|l|l|l|l|l|l|l|l|l|l|l|l|}
		\hline
		Tasks & & & & & & \multicolumn{3}{c|}{Dual-rate/3-rate assignment} & \multicolumn{8}{c|}{Multi-rate assignment}\\ \cline{7-17}
		& $C_i^L$ & $C_i^H$ & $T_i$ & $u_i^L$ & $u_i^H$ & $\theta_i^L$ & $\theta_i^H(=\theta_i^C)$
		& $\theta_i^P(=u_i^H)$ & $\theta_i^L$ & $\theta_{i,1}^{H}$ & $w_1$ & $\theta_{i,2}^H$ & $w_2$ & $\theta_{i,3}^H$ & $w_3$ & $\theta_i^{H}(=u_i^H)$ \\ \hline
		\begin{tabular}[c]{@{}l@{}}$\tau_1$\end{tabular} 
		& 2.8  & 4.9 & 7 & 0.4 & 0.7 & 0.700 & 0.700 & 0.700 & 0.571428 & 1.000 & 2.1 & 0.7 & 0.4 & 0.7 & 13.76 & 0.7 \\ \hline
		\begin{tabular}[c]{@{}l@{}}$\tau_2$\end{tabular} 
		& 1.5  & 4 & 5 & 0.3 & 0.8 & 0.641 & 0.939 & 0.800 & 0.600 & 1.000 & 2.1 & 1.000  & 0.4 & 0.8  & 13.76 & 0.8\\ \hline
		\begin{tabular}[c]{@{}l@{}}$\tau_3$\end{tabular} 
		& 3.5  & 10.5 & 35 & 0.1 & 0.3 & 0.224 & 0.360 & 0.300 & 0.186766 & 0.000 & 2.1 & 0.300 & 0.4 & 0.500  & 13.76 & 0.3 \\ \hline
		\begin{tabular}[c]{@{}l@{}}$\tau_4$\end{tabular} 
		& 15.75  & - & 35 & 0.45 & - & 0.450 & - & \my{-} & 0.450 & - & - & - & -& - & - & -\\ \hline
		\begin{tabular}[c]{@{}l@{}}$\sum$\end{tabular} 
		&   &  &  &  &  & \textcolor{red}{2.015} & 1.999 & 1.80 & \textcolor{red!05!green!50!blue}{1.808294} & 2.000 & & 2.000 & & 2.000 & & 1.80 \\ \hline
	\end{tabular}}
	\caption{\label{tab:example}\textbf{Example task system schedulable under the multi-rate fluid model, but not under the dual-rate fluid model on $2$ processors}. $C_i^L$ denotes low-criticality execution time estimate, $C_i^H$ denotes high-criticality execution time estimate, $T_i$ denotes task period, $u_i^L=C_i^L/T_i$ and $u_i^H=C_i^H/T_i$. For the dual-rate fluid model, $\theta_i^L$ denotes execution rate in the low-criticality mode and $\theta_i^H$ denotes execution rate in the high-criticality mode.
	For the modified dual-rate fluid model (3-rate assignment), $\theta_i^L$ denotes execution rate in the low-criticality mode, $\theta_i^C$ denotes execution rate in the transition period and $\theta_i^P$ denotes execution rate in the pure high-criticality mode.
	For the multi-rate fluid model, $\theta_i^L$ denotes execution rate in the low-criticality mode, $\theta_{i,j}^H (j \in \{ 1,2,3 \})$ and $\theta_i^H$ denote several execution rates for the high-criticality mode, and $w_j (j \in \{1, 2, 3\})$ denotes duration of time for which rate $\theta_{i,j}^H$ will be used by task $\tau_i$. After $w_1+w_2+w_3$ time units in the high-criticality mode, each high-criticality task $\tau_i$ will use the rate $\theta_i^H$.}
\end{table}
\paragraph{Contributions.} The contributions of this paper can be summarized as follows.
\begin{itemize}
\item We propose a new multi-rate fluid model for scheduling implicit-deadline MC task systems on a homogenous multiprocessor platform (Sect.~\ref{sec:model}).
\item We derive a sufficient schedulability test for the multi-rate model (Sect.~\ref{sec:test}), and show that it dominates the schedulability test for the dual-rate model (Sect.~\ref{sec:property}).
\item We present a convex optimization based rate and window duration assignment strategy for the multi-rate model called SOMA (Speed-up Optimal Multi-rate Assignment), and prove that it is speed-up optimal with a speed-up bound of $4/3$ (Sect.~\ref{sec:rates}).
\item We present results from extensive experimental evaluation and show that SOMA outperforms all the other multiprocessor MC scheduling algorithms with known speed-up bounds.
\end{itemize}

\paragraph{Related Work.} Several studies have been done on the design of multi-core MC scheduling algorithms in recent years~(\cite{anderson,baruah_multicoremc,baruah_mcf,gu_mpvd,lee_mcfluid,li_globalmc,pathan,ren_taskgrouping,rodriguez}). Of which, only a few provided both experimental evaluation and analytical performance bounds~(\cite{baruah_multicoremc,baruah_mcf,lee_mcfluid,li_globalmc}).
Li and Baruah \cite{li_globalmc} proposed GLO-EDF\textunderscore VD, a global scheduling algorithm combining the multiprocessor fixed priority algorithm fpEDF and uniprocessor virtual deadline based MC algorithm EDF-VD, and proved that the algorithm has a speed-up bound of $\sqrt{5}+1$. Baruah et al. \cite{baruah_multicoremc} presented a EDF-VD based partitioned scheduling algorithm PAR-EDF\textunderscore VD and proved that the algorithm has a speed-up bound of $8/3 - 4/3m$. They also showed through experimental evaluation that the partitioned algorithm offers better schedulability than the global variant. Lee et al. \cite{lee_mcfluid} proposed the dual-rate fluid scheduling model and rate assignment algorithm called MC-Fluid, and showed that MC-Fluid has a speed-up bound of $(\sqrt{5}+1)/2$. Recently, Baruah et al. \cite{baruah_mcf} proposed a simplified rate assignment algorithm for the dual-rate fluid model called MCF, and proved that both MC-Fluid and MCF are speed-up optimal with a speed-up bound of $4/3$.

In this work, we focus on fluid scheduling model with multiple rates for dual-criticality systems to improve their schedulability with an optimal speed-up bound of $4/3$.

\paragraph{Organization.} The remainder of the paper is as follows. We describe the system model, notations and the dual-rate fluid scheduling algorithm in Sect.~\ref{sec:system_model}. We introduce the multi-rate fluid model in Sect.~\ref{sec:model} and prove its correctness in Sect.~\ref{sec:test}.  We present the properties of the multi-rate fluid model in Sect.~\ref{sec:property}. In Sect.~\ref{sec:rates} we present the execution rates assignment strategy and a heuristic for computing these execution rates. We describe the experiments conducted to evaluate the performance of our algorithm with the existing algorithms in Sect.~\ref{sec:experiments}. Sect.~\ref{sec:summary} concludes the paper and presents the possible future work in fluid scheduling.
\section{System Model}
\label{sec:system_model}

We consider an implicit-deadline sporadic MC task system with two criticality levels (LO and HI). Each MC task $\tau_i$ is characterized by a tuple $(T_i,\chi_i,C_i^L,C_i^H)$, where
\begin{itemize}
\item $T_i \in\mathbb{R}^+$ is the minimum release separation time; we assume an implicit-deadline task model, where deadline $D_i$ of a task is equal to $T_i$.
\item $\chi_i \in\{LO,HI\}$ denotes the criticality level of the task; we use the shortcut notation \emph{LO-task} and \emph{HI-task} to denote a LO-criticality and HI-criticality task respectively.
\item $C_i^L \in\mathbb{R}^+$ is the LO-criticality worst case execution time (WCET) value (denoted as \emph{LO-WCET}).
\item $C_i^H \in\mathbb{R}^+$ is the HI-criticality WCET value (denoted as \emph{HI-WCET}); we assume $C_i^L \leq C_i^H$ for all HI-tasks and $C_i^L = C_i^H$ for all LO-tasks.
\end{itemize}

We consider a MC task system $\tau$ comprised of $n$ sporadic tasks $\{ \tau_1, \ldots , \tau_n \}$. Let $n_H$ denote the number of HI-tasks in the system. Let $\tau_L\myeq\{ \tau_i\in\tau\mid \chi_i=LO\}$ and $\tau_H\myeq\{\tau_i\in\tau\mid \chi_i=HI\}$ denote the set of LO and HI-tasks respectively.

Task and system utilizations are denoted as follows. $u_i^L\myeq C_i^L/T_i$, $u_i^H\myeq C_i^H/T_i$, $U_L^L\myeq\sum_{\tau_i\in\tau_L} u_i^L/m$, $U_H^L\myeq\sum_{\tau_i\in\tau_H} u_i^L/m$ and $U_H^H\myeq\sum_{\tau_i\in\tau_H} u_i^H/m$. 

We assume that the tasks execute sequentially and are not allowed to simultaneously execute on more than one processor at any given time (i.e., $u_i^L\leq1$ and $u_i^H\leq1$). We consider the problem of scheduling this MC task system on a multiprocessor platform comprising $m$ cores.

\paragraph{MC Modes.} The system starts in \emph{LO-mode} and remains in that mode as long as all the tasks signal completion before exceeding LO-WCET values. The system switches to \emph{HI-mode} at the instant when any HI-task executes beyond its LO-WCET and does not signal completion.  \emph{Mode switch instant} is defined as the time instant when this mode change occurs. The system can safely return back to LO-mode at the time instant when all processors idle after the mode switch. We assume that no job of LO-task  $\tau_i$ would exceed $C_i^L$ and no job of HI-task $\tau_i$ would exceed $C_i^H$. 

\paragraph{MC-Schedulable.} A task set $\tau$ is said to be MC-schedulable by a scheduling algorithm if,
\begin{itemize}
\item LO-mode guarantee: Every job of each task in $\tau$ is able to complete LO-WCET execution within its deadline, and 
\item HI-mode guarantee: Every job of each task in $\tau_H$ is able to complete HI-WCET execution within its deadline.
\end{itemize}

Since no LO-task deadlines, including for jobs that are released before the mode switch, are required to be met in HI-mode, it is possible to drop all the LO-task jobs immediately upon mode switch. The HI-task jobs in HI-mode can be classified into two types depending on their release time. A job of task $\tau_i$ is said to be a \textbf{carry-over job}, if it is released before mode switch instant but has not completed its execution until mode switch. All the remaining HI-mode jobs of $\tau_i$, those that are released after the mode switch, are called \textbf{pure HI-mode jobs}. The carry-over and pure HI-mode jobs of all the HI-tasks must be guaranteed HI-WCET budgets by their deadlines. Similarly, the pure LO-mode jobs of all the tasks, those with deadlines before mode switch, must be guaranteed LO-WCET budgets by their deadlines. 

\subsection{Dual-rate Fluid Model}
\label{sec:dual_rate_model}
The dual-rate fluid model~\cite{lee_mcfluid} was designed to schedule implicit deadline task systems with two criticality levels. It assigns different execution rates to tasks in each criticality level.
The execution rate of a task is formally defined in Definition~\ref{execution_rate}.

\begin{definition}[Execution rate, from~\cite{lee_mcfluid}]
\label{execution_rate}
A task $\tau_i$ is said to be executed with execution rate $\theta_i \in \mathbb{R}^+$, s.t. $0 < \theta_i\leq 1$, if every job of the task is executed on a fractional processor with a speed of $\theta_i$.
\end{definition}
 
The dual-rate model can be summarized as follows: 

\begin{itemize}
\item Each task $\tau_i\in \tau$ executes at a constant rate $\theta_i^L$ (where $\theta_i^L \in [u_i^L,1]$) in the LO-mode,
\item All tasks in $\tau_L$ are discarded immediately upon mode switch, and 
\item Each task $\tau_i\in \tau_H$ executes at a constant rate $\theta_i^H$ (where $\theta_i^H \in [u_i^H,1]$) in the HI-mode.
\end{itemize}

An exact schedulability test for the dual-rate fluid model has also been derived (Theorem~1 in~\cite{lee_mcfluid}). Task set $\tau$ is said to be MC-schedulable under dual-rate fluid scheduling iff
\begin{equation}
\begin{split}
\forall \tau_i\in\tau,\hspace{10pt} \theta_i^L \geq u_i^L
\end{split}
\label{eqn:LO_task_feasibility}
\end{equation}
\begin{equation}
\begin{split}
\forall \tau_i\in\tau_H,\hspace{10pt} \frac{u_i^L}{\theta_i^L} + \frac{u_i^H-u_i^L}{\theta_i^H} \leq 1
\end{split}
\label{eqn:HI_task_feasibility1}
\end{equation}
\begin{equation}
\begin{split}
\forall \tau_i\in\tau_H,\hspace{10pt} \theta_i^H \geq \theta_i^L
\end{split}
\label{eqn:HI_task_feasibility2}
\end{equation}
\begin{equation}
\begin{split}
\sum_{\tau_i\in\tau} \theta_i^L \leq m
\end{split}
\label{eqn:LO_mode_feasibility}
\end{equation}
\begin{equation}
\begin{split}
\sum_{\tau_i\in\tau_H} \theta_i^H \leq m
\end{split}
\label{eqn:HI_mode_feasibility}
\end{equation}

Equations~\eqref{eqn:LO_mode_feasibility} and~\eqref{eqn:HI_mode_feasibility} ensure that the assigned rates are feasible in each mode on the multiprocessor platform (denoted as \textbf{platform feasibility tests}). Equation~\eqref{eqn:LO_task_feasibility} ensures that each task is schedulable in the LO-mode, i.e., each job of the task is able to receive sufficient budget (proportional to $u_i^L$) within its deadline (denoted as \textbf{LO-mode task schedulability test}). Likewise, Equations~\eqref{eqn:HI_task_feasibility1} and~\eqref{eqn:HI_task_feasibility2} ensure that each HI-task is schedulable in the HI-mode, including carry-over jobs (denoted as \textbf{HI-mode task schedulability test}). Note that although Equation~\eqref{eqn:HI_task_feasibility2} is not specified in Theorem~1 of~\cite{lee_mcfluid}, it is assumed in the derivation of the theorem. This HI-mode test is essentially derived by identifying a worst case mode switch instant for each HI-task. As shown in~\cite{lee_mcfluid}, the worst case mode switch occurs at the same time instant when a carry-over job of the task completes its LO-WCET, i.e., the task itself triggers the mode switch. This observation is intuitive, because in this case the carry-over job executes using the higher HI-mode rate for the shortest possible duration of time.

\section{Multi-rate Fluid Scheduling Model}
\label{sec:model}

Under the multi-rate fluid scheduling model that we propose in this paper, each task $\tau_i \in \tau$ executes with a single rate in the LO-mode as in the dual-rate model. But, unlike the dual-rate model, each task $\tau_i \in \tau_H$ executes with at most $n_H+1$ rates in the HI-mode, where $n_H = |\tau_H|$ is the number of HI-criticality tasks. The intuition behind these multiple rates in the HI-mode can be explained as follows. When a mode switch occurs, there are potentially $n_H$ carry over jobs in the system, all of which may require an execution rate higher than their HI-criticality utilization (i.e., $u_i^H$). By allowing them to adjust their execution rates several times in the \emph{transition period} immediately after a mode switch, it may be possible to accommodate more carry over demand in the system.

\paragraph{Multi-rate Fluid Scheduling Model}: The multi-rate fluid model can be formally defined as follows:

\begin{itemize}
\item \textit{LO-mode execution rate}: Each job of task $\tau_i \in \tau$ will start executing at a rate $\theta_i^L$ in the LO-mode. If there is no mode switch while the job is active, then this is the only rate at which it will execute.   
\item \textit{Transition execution rates}: Upon a mode switch, all the LO-tasks will be immediately dropped. The transition period after the mode switch is partitioned into $n_H$ transition windows ($j: 1 \leq j \leq n_H$), each of a fixed duration $w_j$. The jobs of a HI-task $\tau_i \in \tau_H$ execute at a constant rate of $\theta_{i,j}^{H}$ in each transition window $j$. 
\item \textit{HI-mode execution rate}: After the completion of the transition period (after $\sum_j w_j$ time units from the mode switch), each job of HI-task $\tau_i \in \tau_H$ executes at a constant rate of $\theta_i^H$. 
\end{itemize}

Thus, we denote the bounded period of time between the mode switch and $\sum_j w_j$ time units thereafter as the \emph{transition period}. Note that the execution rates in the transition period can be used either by carry over jobs or pure HI-mode jobs that are released in the transition period. We denote the HI-mode jobs released in the transition period as \textbf{transition jobs} and the HI-mode jobs released after the transition period as \textbf{stable jobs}. It is also worth noting that in this multi-rate model the $n_H$ transition window durations are all determined offline, and remain fixed at runtime. As a consequence, similar to the dual-rate model, the runtime scheduling mechanism for this multi-rate model is also very simple. 

The proposed multi-rate model is a generalization of the dual-rate model that was discussed in Sect.~\ref{sec:dual_rate_model}. If we set all the transition execution rates to be equal to $\theta_i^H$, then in the resulting model, each task executes with a fixed rate in LO-mode and a fixed rate in HI-mode. This setting is identical to the dual-rate model.
\section{Schedulability Test}
\label{sec:test}

In the previous section we presented the multi-rate model for dual-criticality implicit-deadline sporadic task system and the worst-case mode switch pattern for which the execution rates need to be determined. In this section we derive a sufficient schedulability test for the multi-rate fluid scheduling model defined in Sect.~\ref{sec:model}. We also show that this test is equivalent to the dual-rate schedulability test (Theorem~1 in~\cite{lee_mcfluid}) if we set all the transition rates to $\theta_i^H$. 
 
This derivation is comprised of four steps:
\begin{enumerate}
\item \textbf{LO-mode task schedulability test}: We derive a test using LO-mode execution rates to ensure that each task is schedulable in the LO-mode.
\item \textbf{LO-mode platform feasibility test}: We derive a test using LO-mode execution rates and the number of cores to ensure that the allocated LO-mode rates are feasible on the multiprocessor platform. 
\item \textbf{HI-mode platform feasibility test}: We derive a test using HI-mode execution rates and the number of cores to ensure that the allocated HI-mode rates are feasible on the multiprocessor platform. Here we need to consider all the $n_H+1$ rates assigned to each HI-task. 
\item \textbf{HI-mode task schedulability test}: We derive several conditions using LO-mode and HI-mode execution rates to ensure that each HI-task is schedulable in the HI-mode. We need to consider carry over jobs, transition jobs as well as stable jobs in this step. 
\end{enumerate}

\paragraph{LO-mode task schedulability test.} In the LO-mode, jobs of each task $\tau_i \in \tau$ execute with a single rate $\theta_i^L$. It is easy to see that $\theta_i^L \geq u_i^L$ is a necessary condition for task schedulability in LO-mode. Since our framework is based on fluid scheduling, this is also a sufficient schedulability condition in the LO-mode. If the task receives an execution rate of at least $u_i^L$, then the total allocated budget to each job of the task is at least $u_i^L \times T_i$, which is sufficient to meet deadlines. We record this test in the following proposition.

\begin{proposition} 
\label{prop:lo_sched}
Task system $\tau$ is schedulable in the LO-mode iff 
\begin{equation*}
\forall \tau_i \in \tau, \theta_i^L \geq u_i^L.
\end{equation*}
\end{proposition}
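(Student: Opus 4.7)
The plan is to prove both directions of the ``iff'' by a direct budget accounting over one task period. By Definition~\ref{execution_rate} and the LO-mode rule in Sect.~\ref{sec:model}, a job of $\tau_i$ that is executed at the constant rate $\theta_i^L$ throughout its relative deadline receives exactly $\theta_i^L \cdot T_i$ units of service. Since the LO-WCET demand of such a job is $C_i^L$, the whole proposition reduces to comparing $\theta_i^L \cdot T_i$ against $C_i^L$, i.e. $\theta_i^L$ against $u_i^L = C_i^L / T_i$.

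For the sufficiency direction ($\Leftarrow$), I would fix an arbitrary task $\tau_i \in \tau$ and any job of it released at some time $r$. The LO-mode schedulability obligation only concerns jobs whose deadline lies before (or at) the mode switch, so the system stays in LO-mode throughout $[r, r+T_i]$ for every job that must be analyzed here. By the multi-rate model definition, such a job is executed at the constant rate $\theta_i^L$ on its fractional processor throughout this interval, and thus accumulates budget $\theta_i^L \cdot T_i$. The hypothesis $\theta_i^L \geq u_i^L$ then gives $\theta_i^L \cdot T_i \geq u_i^L \cdot T_i = C_i^L$, so the job meets its LO-WCET requirement by its deadline. Since $\tau_i$ and the job were arbitrary, every job of every task in $\tau$ satisfies the LO-mode guarantee.

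For the necessity direction ($\Rightarrow$), I would argue by contrapositive: assume $\theta_i^L < u_i^L$ for some $\tau_i \in \tau$. The sporadic MC task model permits a scenario in which a job of $\tau_i$ is released at some time $r$, no mode switch occurs in $[r, r+T_i]$, and this job actually demands its full LO-WCET $C_i^L$. Under the fluid rule the total budget allocated to the job by its deadline is $\theta_i^L \cdot T_i < u_i^L \cdot T_i = C_i^L$, so the job cannot complete on time and the LO-mode guarantee fails, contradicting schedulability.

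There is no real obstacle in this proof; the only point that deserves a sentence of care is the explicit appeal to the constant-rate LO-mode specification of Sect.~\ref{sec:model} in both directions, together with the freedom granted by the sporadic and MC models in the necessity direction to exhibit an adversarial release-and-demand pattern when $\theta_i^L < u_i^L$.
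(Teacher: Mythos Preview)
Your proposal is correct and follows essentially the same budget-accounting argument as the paper: the paper's justification (given in the paragraph immediately preceding the proposition) observes that a single fluid rate $\theta_i^L$ delivers $\theta_i^L \cdot T_i$ units per period, making $\theta_i^L \geq u_i^L$ both necessary and sufficient. Your treatment is simply a more explicit unpacking of the same idea, with the necessity direction spelled out via a contrapositive scenario that the paper leaves as ``easy to see.''
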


\paragraph{LO-mode platform feasibility test.} In the LO-mode, a set of execution rates is feasible on a multiprocessor platform comprising $m$ cores if and only if the sum total of the rates is no more than $m$. If the total is more than $m$, then clearly the rates cannot be assigned using $m$ cores. Whereas if it is no more than $m$, then it can be assigned because a single core can schedule multiple tasks at the same time under the fluid scheduling model.  

\begin{proposition}
\label{prop:lo_feas}
Execution rates assigned to task system $\tau$ are feasible in the LO-mode iff 
\begin{equation*}
\sum_{\tau_i\in\tau} \theta_i^L \leq m.
\end{equation*}
\end{proposition}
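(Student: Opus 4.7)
The plan is to establish the equivalence by proving both directions separately. The forward direction (necessity) follows from a straightforward capacity argument: on a platform of $m$ identical cores, the total amount of work delivered in any interval of length $t$ is bounded by $m\cdot t$. If the fluid schedule supplies each $\tau_i$ with work at rate $\theta_i^L$, then over any such interval it must deliver a cumulative amount of $t\cdot \sum_{\tau_i\in\tau}\theta_i^L$ units of work. If this quantity exceeds $m\cdot t$, feasibility is violated. Hence $\sum_{\tau_i\in\tau}\theta_i^L \le m$ is necessary.

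For the reverse direction (sufficiency), I would construct an explicit fluid assignment by a McNaughton-style wrap-around packing. Order the tasks arbitrarily as $\tau_1,\ldots,\tau_n$ and compute cumulative sums $S_k \myeq \sum_{i=1}^{k}\theta_i^L$, with $S_0 = 0$. Each task $\tau_k$ is served on core $\lceil S_{k-1}\rceil$ for the fraction of time corresponding to the interval $[S_{k-1},\min(\lceil S_{k-1}\rceil, S_k)]$ and, if $S_k > \lceil S_{k-1}\rceil$, continues on the next core $\lceil S_{k-1}\rceil + 1$ for the fraction corresponding to $[\lceil S_{k-1}\rceil, S_k]$. Because $S_n \le m$, no core's fractional allocation exceeds $1$, so the assignment respects per-core capacity. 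Because $\theta_i^L \le 1$ (which holds since $\theta_i^L \in [u_i^L,1]$ by the definition of the rate and the sequentiality assumption $u_i^L \le 1$), each task's work is spread across at most two consecutive cores, and the two fractions sum to exactly $\theta_i^L$.

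The one subtlety I expect to address carefully is the sequential-execution constraint: a task may not execute on two cores simultaneously. This is handled by interpreting the per-core allocations as disjoint time shares within any sufficiently small observation window; since a task's two fractional allocations sum to at most $1$, they can be realized by non-overlapping time slices on the two cores, delivering an instantaneous average rate of exactly $\theta_i^L$ without any simultaneous execution. Once this is spelled out, sufficiency is immediate and the proposition follows.
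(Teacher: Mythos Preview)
Your argument is correct, but it is considerably more elaborate than what the paper actually does. The paper disposes of the proposition in two sentences of prose preceding the statement: necessity is ``clear'' (total capacity is $m$), and sufficiency holds ``because a single core can schedule multiple tasks at the same time under the fluid scheduling model.'' In other words, the paper simply invokes the defining property of the fluid abstraction---that a core may simultaneously host several fractional processors whose rates sum to at most one---and observes that any rate vector with $\sum_i \theta_i^L \le m$ and each $\theta_i^L \le 1$ can therefore be packed into $m$ cores.

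Your McNaughton-style wrap-around construction is a genuinely different route: you build an explicit core-by-core allocation and then resolve the sequentiality constraint by time-slicing within small windows. This is more work than the fluid setting requires, but it buys you something the paper's one-liner does not: an explicit, constructive assignment that already anticipates the non-fluid realization (indeed, the paper later invokes McNaughton's algorithm when mapping to DP-Fair in Section~\ref{sec:property}). So your proof is correct and self-contained; it is just heavier machinery than the paper deploys for this proposition, where the fluid abstraction makes the result essentially immediate.
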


\paragraph{HI-mode platform feasibility test.} Similar to the LO-mode case, feasibility of the assigned execution rates is ensured as long as the sum total of the rates is no more than $m$. However, since we have multiple execution rates in the HI-mode, we need to ensure that rates are feasible in each of the $n_H+1$ windows.

\begin{proposition}
\label{prop:hi_feas}
Execution rates assigned to task system $\tau$ are feasible in the HI-mode iff 
\begin{align}
\forall j (1 \leq j \leq n_H), & \sum_{\tau_i\in\tau_H} \theta_{i,j}^H \leq m, ~\mbox{and} \label{eqn:transition_hi} \\
& \sum_{\tau_i\in\tau_H} \theta_i^H \leq m. \label{eqn:pure_hi}
\end{align}
\end{proposition}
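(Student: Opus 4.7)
The plan is to mirror the argument of Proposition~\ref{prop:lo_feas}, but applied separately to each of the $n_H+1$ constant-rate regions that make up the HI-mode timeline: the $n_H$ transition windows (during which HI-task $\tau_i$ executes at rate $\theta_{i,j}^H$) and the stable region that follows (during which $\tau_i$ executes at rate $\theta_i^H$). Because the multi-rate model prescribes piecewise-constant rates with preannounced breakpoints at $w_1, w_1+w_2, \ldots, \sum_{j} w_j$, feasibility in HI-mode decomposes cleanly into feasibility in each piece, so the combined condition is the conjunction of per-window feasibility conditions plus feasibility in the stable region.

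For the necessity direction ($\Rightarrow$), I would argue by contradiction in each region independently. If for some transition window $j$ we had $\sum_{\tau_i\in\tau_H}\theta_{i,j}^H > m$, then throughout that entire window the instantaneous aggregate processing demand would exceed the total capacity of the $m$ cores; since a task cannot execute on more than one core simultaneously (the $\theta$'s are capped at $1$) and the window has positive duration $w_j$, no valid assignment of processing time can realise these rates. An identical argument using the stable region rules out $\sum_{\tau_i\in\tau_H}\theta_i^H > m$.

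For the sufficiency direction ($\Leftarrow$), within each of the $n_H+1$ regions the rate vector is constant and sums to at most $m$ with each component in $[0,1]$, so a standard fluid-to-discrete emulation (for example, the wrap-around/McNaughton-style construction underlying DP-Fair scheduling, already invoked implicitly for Proposition~\ref{prop:lo_feas}) realises those rates on $m$ cores over the region. Concatenating the schedules across successive regions is unproblematic, because the multi-rate model permits arbitrary rate changes at the window boundaries and there is no state carried between regions beyond the cumulative executed work, which fluid scheduling accounts for by construction. There is no real obstacle here; the only point worth flagging is that no coupling condition between adjacent windows is needed, which is precisely what justifies treating the $n_H+1$ inequalities in \eqref{eqn:transition_hi}--\eqref{eqn:pure_hi} as independent and jointly sufficient.
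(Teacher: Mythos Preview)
Your proposal is correct and follows essentially the same approach as the paper: the paper simply notes that the HI-mode feasibility test is ``similar to the LO-mode case'' but must be checked in each of the $n_H+1$ windows, which is exactly your decomposition into per-window applications of the Proposition~\ref{prop:lo_feas} argument. Your write-up is in fact more detailed than the paper's one-sentence justification, spelling out both directions explicitly.
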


\subsection{HI-mode task schedulability test} 

\begin{figure}
\centering
\includegraphics[width=0.75\textwidth]{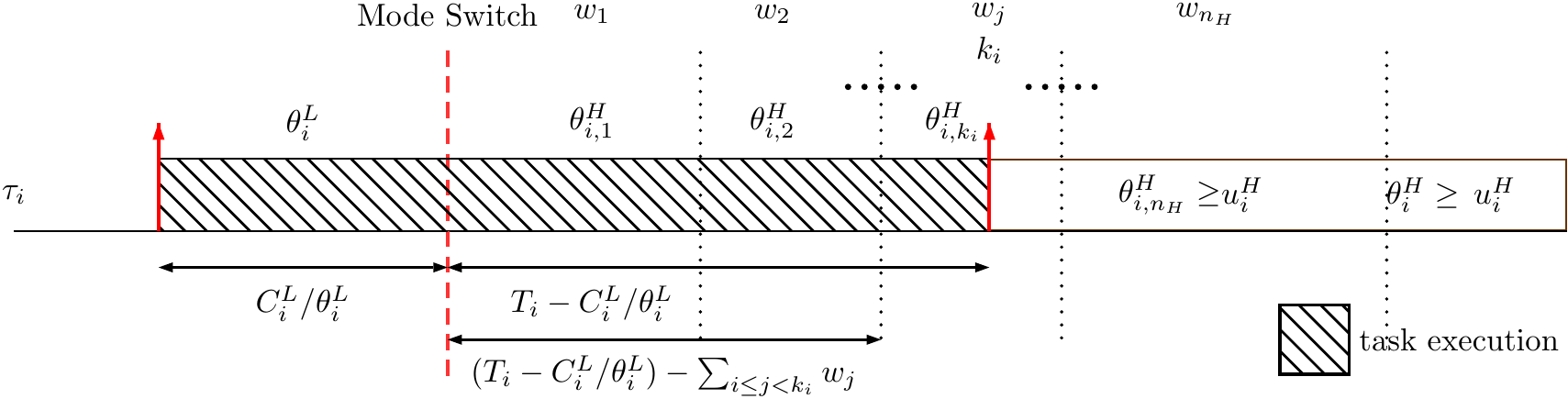}
\caption{Worst case mode switch instant of a carry over job}
\label{fig:carry_over_job_wc}
\end{figure}

In this section we derive sufficient schedulability conditions for tasks in the HI-mode. There are three types of jobs to consider. Carry over jobs that are released before the mode switch but remain unfinished in the LO-mode, transition jobs that are released in the transition period (within $\sum_{j: 1\leq j \leq n_H} w_j$ time units from the mode switch), and stable jobs that are released after the transition period. We first consider the stable jobs and then derive the tests for the other jobs. 

\paragraph{Stable jobs.} Since these jobs are released after the transition period, they always execute using a single rate $\theta_i^H$. As long as $\theta_i^H \geq u_i^H$ these jobs can meet their deadlines because $u_i^H * T_i = C_i^H$ units of execution would be guaranteed. It is easy to see that this is also a necessary condition for schedulability of the stable jobs; if $\theta_i^H < u_i^H$ then the stable jobs cannot meet their deadlines. We record this test in the following proposition.  

\begin{proposition}
\label{prop:stable_jobs}
A stable job of task $\tau_i$, that is a job released at or after $\sum_{j: 1\leq j \leq n_H} w_j$ time units from the mode switch, is schedulable iff 
\begin{equation*}
\theta_i^H  \geq u_i^H
\end{equation*}
\end{proposition}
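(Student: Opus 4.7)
The plan is straightforward because stable jobs experience a completely homogeneous execution pattern, so the proof will essentially mirror the argument used in Proposition~\ref{prop:lo_sched}. First I would fix an arbitrary stable job $J$ of task $\tau_i$, released at some time $t_r \geq \sum_{j=1}^{n_H} w_j$ after the mode switch, with absolute deadline $t_r + T_i$. Since $J$ is released after the transition period ends, the entire interval $[t_r, t_r + T_i]$ lies in the pure HI-mode region, so by the third bullet of the multi-rate model definition in Sect.~\ref{sec:model}, $J$ executes at the single constant rate $\theta_i^H$ throughout its lifetime.

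For the sufficiency direction, I would integrate the execution rate over $J$'s lifetime to obtain total allocated budget $\theta_i^H \cdot T_i$. Assuming $\theta_i^H \geq u_i^H = C_i^H / T_i$, this yields $\theta_i^H \cdot T_i \geq C_i^H$, which is enough to execute $J$ for its entire HI-WCET by its deadline under fluid scheduling (Definition~\ref{execution_rate}). For the necessity direction, I would argue contrapositively: if $\theta_i^H < u_i^H$, then the maximum cumulative execution that $J$ can accumulate by its deadline is $\theta_i^H \cdot T_i < u_i^H \cdot T_i = C_i^H$, and by considering a job that actually executes for its full $C_i^H$ budget (which is permitted under the HI-mode semantics), we obtain a deadline miss.

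The only minor subtlety is to justify that the argument holds for \emph{every} release time $t_r \geq \sum_{j=1}^{n_H} w_j$, which follows immediately from the fact that $\theta_i^H$ is time-invariant in the post-transition region, so the budget calculation depends only on the length $T_i$ and not on $t_r$ itself. There is no genuine obstacle here; this is a routine two-line fluid-scheduling budget argument, structurally identical to Proposition~\ref{prop:lo_sched}, and the real technical effort of the HI-mode task schedulability test will come in the subsequent treatment of transition jobs and carry-over jobs, where the execution rate is no longer constant over the job's lifetime.
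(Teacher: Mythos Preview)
Your proposal is correct and matches the paper's argument essentially line for line: the paper also observes that a stable job executes at the single rate $\theta_i^H$, that $\theta_i^H \geq u_i^H$ guarantees $\theta_i^H \cdot T_i \geq C_i^H$ units of execution by the deadline, and that the condition is necessary because $\theta_i^H < u_i^H$ forces a miss. There is nothing to add.
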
 

\paragraph{Carry over jobs.} Let $k_i$ denote the largest index ($1 \leq k_i \leq n_H+1$) such that the earliest deadline of any carry over job of task $\tau_i$ is strictly greater than $\sum_{j: 1\leq j < k_i} w_j$ time units after the mode switch (see Figure~\ref{fig:carry_over_job_wc}). That is, if a carry over job of $\tau_i$ executes for exactly $C_i^L$ time units and triggers the mode switch, then it has the earliest possible deadline in the HI-mode ($T_i - C_i^L/\theta_i^L$ time units from the mode switch instant). In this case, its deadline would either fall in the $k_i^{th}$ transition window if $k_i \leq n_H$ or after the transition period if $k_i = n_H+1$. Formally, $k_i$ can be defined as follows.

\begin{definition}[Earliest completion window $k_i$]
\label{def:k_i}
\begin{align*}
& \sum_{j: 1 \leq j < k_i} w_j < T_i - C_i^L/\theta_i^L, \\
& \mbox{and if $k_i \leq n_H$, then} \\
& \sum_{j: 1 \leq j \leq k_i} w_j \geq T_i - C_i^L/\theta_i^L.
\end{align*}
\end{definition}

Note that no job of task $\tau_i$ (carry over or otherwise) can have a deadline within $\sum_{j: 1 \leq j < k_i} w_j$ time units of the mode switch instant. This is an important property of $k_i$ that we will use in the following derivations. The below theorem derives a sufficient schedulability test for the carry over jobs. Similar to the dual-rate model, the main intuition behind this theorem is that the worst case scenario for a carry over job occurs when the job itself triggers the mode switch. It means that the job has executed with the smaller rate $\theta_i^L$ for the longest possible duration in the LO-mode and left with the shortest possible duration (i.e., $T_i - C_i^L/\theta_i^L$) in the HI-mode to complete its remaining execution (i.e., $C_i^H-C_i^L$).

\begin{figure}
\centering
\includegraphics[width=0.75\textwidth]{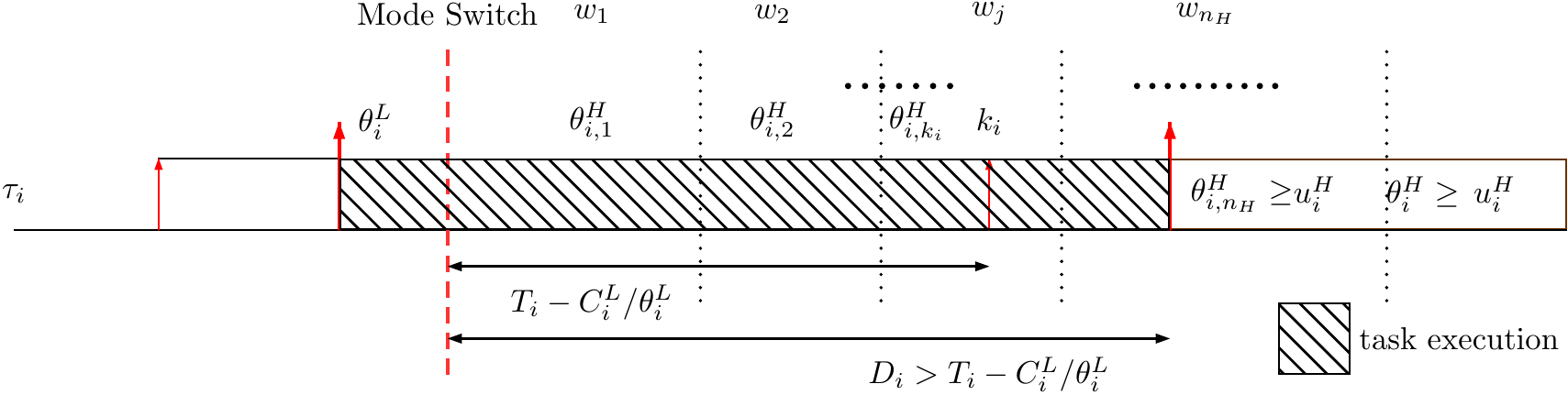}
\caption{A carry over job that does not trigger a mode switch}
\label{fig:carry_over_job}
\end{figure}

\begin{theorem}
\label{thm:co_hi_mode}
A carry over job of task $\tau_i \in \tau_H$ can meet its deadline in the HI-mode if the following three conditions are met.
\begin{align}
& \nonumber \sum_{j: 1 \leq j < k_i} \theta_{i,j}^H \times w_j + R_i \times \left (T_i - C_i^L/\theta_i^L - \sum_{j: 1 \leq j < k_i} w_j \right) \\
& \label{eqn:co_wc} \geq C_i^H - C_i^L \\
& \label{eqn:co_nwc1}
\forall j: k_i \leq j \leq n_H, \theta_i^L \leq \theta_{i,j}^H  \\
& \label{eqn:co_nwc2}
\theta_i^L \leq \theta_i^H  
\end{align}
where $R_i = \theta_{i,k_i}^H$ if $k_i \leq n_H$ and $R_i = \theta_i^H$ otherwise.
\end{theorem}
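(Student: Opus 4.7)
The plan is to parametrize over the mode-switch instant $t_{MS}$ from the carry-over job's perspective, verify that Equation~\eqref{eqn:co_wc} corresponds exactly to the worst-case value $t^* = C_i^L/\theta_i^L$, and then use Equations~\eqref{eqn:co_nwc1} and~\eqref{eqn:co_nwc2} to argue that every smaller value of $t_{MS}$ is dominated by $t^*$.

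First, I would fix a carry-over job $J$ of $\tau_i$ with release time $0$ and absolute deadline $T_i$. Since $J$ must not have completed its LO-WCET by $t_{MS}$, we have $\theta_i^L \cdot t_{MS} \le C_i^L$, so $t_{MS} \in [0, t^*]$, with equality corresponding to $J$ itself triggering the mode switch. Let $\rho(s)$ denote the rate applied to $\tau_i$ at time $s$ measured from the mode switch, so $\rho(s) = \theta_{i,j}^H$ when $s$ falls in transition window $j$ and $\rho(s) = \theta_i^H$ once $s \ge \sum_{j=1}^{n_H} w_j$. Writing $B(s) = \int_0^s \rho(x)\, dx$ and setting $s = T_i - t_{MS}$, the carry-over job meets its deadline if and only if
\[
g(s) := B(s) - \bigl(C_i^H - \theta_i^L (T_i - s)\bigr) \ge 0.
\]

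Second, I verify that $g(T_i - t^*) \ge 0$ is exactly Equation~\eqref{eqn:co_wc}. By the definition of $k_i$, the value $s = T_i - t^*$ lies strictly inside window $k_i$ when $k_i \le n_H$, and strictly beyond all transition windows when $k_i = n_H + 1$. Integrating $\rho$ up to this point gives $B(T_i - t^*) = \sum_{j<k_i} \theta_{i,j}^H w_j + R_i \bigl(T_i - C_i^L/\theta_i^L - \sum_{j<k_i} w_j\bigr)$, which matches the left-hand side of Equation~\eqref{eqn:co_wc} under both cases of $R_i$. Substituting $\theta_i^L t^* = C_i^L$ then turns $g(T_i - t^*) \ge 0$ directly into Equation~\eqref{eqn:co_wc}, so the worst case is handled.

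Third, I would show that $g$ is non-decreasing on $[T_i - t^*, T_i]$, which dominates every non-worst-case instant $t_{MS} < t^*$ because $s = T_i - t_{MS} \ge T_i - t^*$. Differentiating gives $g'(s) = \rho(s) - \theta_i^L$, so it suffices to prove $\rho(s) \ge \theta_i^L$ for every $s \ge T_i - t^*$. By the definition of $k_i$, $T_i - t^* > \sum_{j<k_i} w_j$, so any $s \ge T_i - t^*$ lies either in a transition window with index $\ge k_i$ or in the stable region; Equation~\eqref{eqn:co_nwc1} gives $\theta_{i,j}^H \ge \theta_i^L$ for every $j \ge k_i$, and Equation~\eqref{eqn:co_nwc2} gives $\theta_i^H \ge \theta_i^L$. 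Hence $g(s) \ge g(T_i - t^*) \ge 0$ for every permissible $t_{MS}$, completing the argument. The main obstacle I anticipate is keeping the $k_i \le n_H$ and $k_i = n_H + 1$ cases on the same footing, but folding the stable region into the piecewise-constant rate $\rho$ lets both cases be handled uniformly by the single piecewise-linear function $g$, so the bookkeeping across the window boundary at $\sum_j w_j$ never needs to be carried out separately.
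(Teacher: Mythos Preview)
Your proposal is correct and follows essentially the same approach as the paper: identify $t_{MS}=C_i^L/\theta_i^L$ as the worst case and show Equation~\eqref{eqn:co_wc} handles it, then use Equations~\eqref{eqn:co_nwc1} and~\eqref{eqn:co_nwc2} to argue that any earlier mode switch only makes things easier because every rate applied beyond $T_i-C_i^L/\theta_i^L$ is at least $\theta_i^L$. The paper states this last step as ``the extra remaining execution $(t-(T_i-C_i^L/\theta_i^L))\theta_i^L$ is covered by the interval $[T_i-C_i^L/\theta_i^L,t]$,'' whereas you package the same inequality as monotonicity of $g(s)=B(s)-(C_i^H-\theta_i^L(T_i-s))$ via $g'(s)=\rho(s)-\theta_i^L\ge 0$; the two arguments are algebraically equivalent.
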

\begin{proof} 
\textbf{Only if:} We first show that Equation~\eqref{eqn:co_wc} is a necessary condition for the schedulability of a carry over job that triggers the mode switch. But since Equations~\eqref{eqn:co_nwc1} and~\eqref{eqn:co_nwc2} are not necessary, the theorem itself only presents a sufficient test. If Equation~\eqref{eqn:co_wc} does not hold then this carry over job will miss its deadline. Consider the carry over job shown in Figure~\ref{fig:carry_over_job_wc}. Since it triggers the mode switch, its remaining execution time at mode switch is $C_i^H - C_i^L$, and the time remaining to deadline is $T_i - C_i^L/\theta_i^L$. From the definition of $k_i$ we get $\sum_{j: 1 \leq j < k_i} w_j < T_i - C_i^L/\theta_i^L$, and the total execution for the carry over job in this window is $\sum_{j: 1 \leq j < k_i} \theta_{i,j}^H * w_j$. Thus the remaining execution of $C_i^H - C_i^L - \sum_{j: 1 \leq j < k_i} \theta_{i,j}^H * w_j$ must be provided in the remaining time window of $T_i - C_i^L/\theta_i^L - \sum_{j: 1 \leq j < k_i} w_j$. This is only possible if Equation~\eqref{eqn:co_wc} holds.

\textbf{If:} Assuming Equation~\eqref{eqn:co_wc} holds, the total execution a carry over job triggering a mode switch receives after the mode switch is $\geq C_i^H-C_i^L$. Since it has already received $C_i^L$ units of execution before mode switch, it can meet its deadline. Let us consider the case of a carry over job that does not trigger the mode switch as shown in Figure~\ref{fig:carry_over_job}. In this case the carry over job has a deadline greater than $T_i - C_i^L/\theta_i^L (> \sum_{j: 1 \leq j < k_i} w_j)$. Suppose its deadline is at some time instant $t$. Then, from Equations~\eqref{eqn:co_nwc1} and~\eqref{eqn:co_nwc2} we get that each execution rate assigned to this carry over job in the time interval $[T_i - C_i^L/\theta_i^L, t]$ is at least as much as $\theta_i^L$. Hence, the amount of execution over and above $C_i^H-C_i^L$ remaining at mode switch ($(t-T_i - C_i^L/\theta_i^L) \times \theta_i^L$) is guaranteed to be provided by the job's deadline in the HI-mode.\qed
\end{proof}

\subsubsection{Schedulability tests for transition HI-mode jobs} 

Transition HI-mode jobs are released in the transition period (within $\sum_{j: 1\leq j \leq n_H} \allowbreak w_j$ time units from the mode switch). We further classify them into two categories depending on whether they are released before the $k_i^{th}$ window or not. Transition jobs released before $\sum_{j: 1\leq j < k_i} w_j$ are denoted as \textbf{early transition jobs}, and the remaining transition jobs are denoted as \textbf{late transition jobs}. Note that by definition, when $k_i = n_H+1$ all the transition jobs are in fact early transition jobs. We now derive schedulability conditions for each of these jobs separately. Figure~\ref{fig:transition_jobs} shows examples of both these transition jobs.
\begin{figure}
	    \centering
	    \subfigure[Early transition job]{
            	\includegraphics[width=0.75\textwidth]{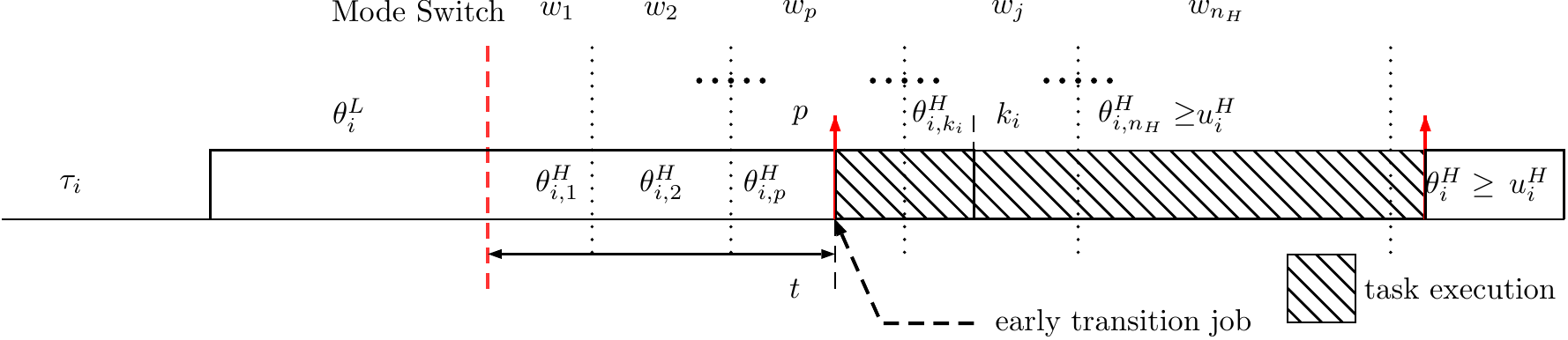}
            	\label{fig:early_transition_job}
		}
	     \subfigure[Late transition job]{
            	\includegraphics[width=0.75\textwidth]{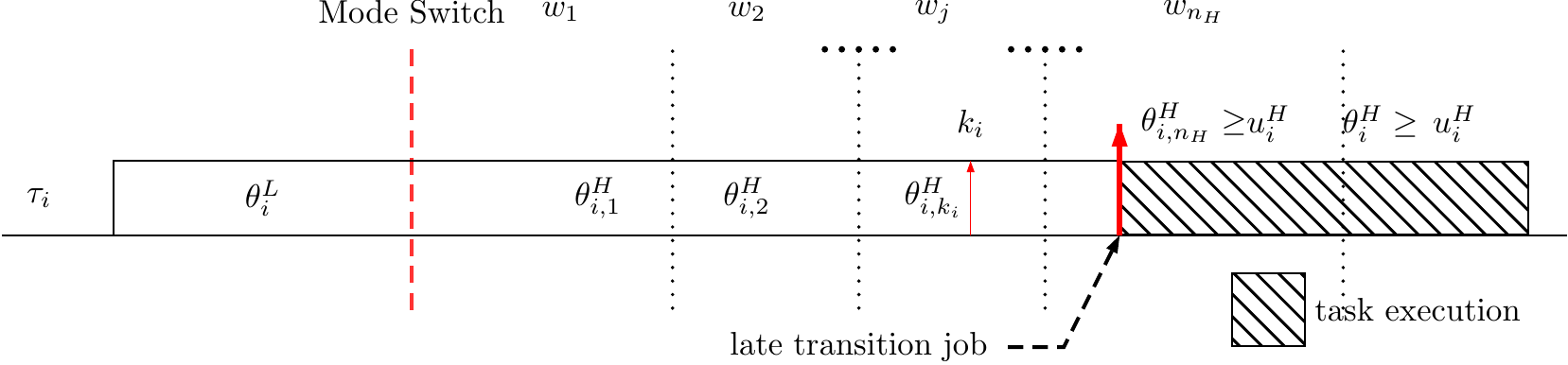}
            	\label{fig:late_transition_job}
	     }
\caption{Transition jobs}
\label{fig:transition_jobs}
\end{figure}

\paragraph{Early transition jobs.} Early transition jobs are released in the transition period and use the execution rates in the transition windows before $k_i$. To derive the schedulability conditions for this case, we first present a simple property of non-decreasing execution rates.

\begin{lemma}
\label{lem:1}
Suppose the execution rates in the transition windows prior to $k_i$ satisfy the following conditions.
{\small \begin{align}
& \label{eqn:lem_transition1}
\sum_{j: 1\leq j < k_i} \theta_{i,j}^H \times w_j \geq  u_i^H \times \sum_{j: 1\leq j < k_i} w_j  \\
& \label{eqn:lem_transition2}
\forall j: 1\leq j < k_i, \theta_{i,j}^H \leq \theta_{i,j+1}^H 
\end{align}}

Then, in any contiguous time period after the mode switch but immediately preceding the $k_i^{th}$ window (say $[t, \sum_{j: 1 \leq j < k_i} w_j]$ where $t$ is a time instant relative to the mode switch instant), the total execution assigned to $\tau_i$ is at least $(\sum_{j: 1 \leq j < k_i} w_j - t)u_i^H$. 
\end{lemma}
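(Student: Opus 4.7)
The plan is to reduce the statement to the elementary fact that, if a sequence of non-decreasing values has average at least $u_i^H$, then every suffix also has average at least $u_i^H$. Concretely, let $W \myeq \sum_{j: 1 \leq j < k_i} w_j$ denote the endpoint (relative to the mode switch) of the early portion of the transition period. By Equation~\eqref{eqn:lem_transition1}, the total execution allocated to $\tau_i$ during $[0, W]$ is at least $W \cdot u_i^H$, i.e.\ the average rate over $[0,W]$ is $\geq u_i^H$. We want the same bound on the suffix $[t, W]$ for any $t \in [0, W]$, and we will derive it by contradiction using monotonicity.

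First I would locate $t$ within the window structure: let $\ell$ ($1 \leq \ell < k_i$) be the unique index satisfying $\sum_{j: 1 \leq j < \ell} w_j \leq t < \sum_{j: 1 \leq j \leq \ell} w_j$. Writing the execution on $[0,t]$ and on $[t,W]$ explicitly as sums of the form $\theta_{i,j}^H \cdot (\text{length})$, I would use Equation~\eqref{eqn:lem_transition2} to conclude two monotonicity facts: every rate that is active on $[0, t]$ is at most $\theta_{i,\ell}^H$, and every rate that is active on $[t, W]$ is at least $\theta_{i,\ell}^H$. Hence the average rate on $[0, t]$ is at most $\theta_{i,\ell}^H$, which in turn is at most the average rate on $[t, W]$.

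Now suppose for contradiction that the total execution on $[t, W]$ is strictly less than $(W - t) u_i^H$, i.e.\ the average on $[t, W]$ is strictly less than $u_i^H$. By the monotonicity inequality just derived, the average on $[0, t]$ is also strictly less than $u_i^H$, so the total execution on $[0, t]$ is strictly less than $t \cdot u_i^H$. Adding the two strict inequalities gives a total execution on $[0, W]$ strictly less than $W \cdot u_i^H$, contradicting Equation~\eqref{eqn:lem_transition1}. Therefore the execution on $[t, W]$ must be at least $(W - t) u_i^H$, which is exactly the claim.

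The argument is essentially a one-line inequality chase, so I do not expect a real obstacle; the only care required is handling the partial window containing $t$, where the relevant rate $\theta_{i,\ell}^H$ appears in both the prefix $[0,t]$ and the suffix $[t,W]$. The boundary cases $t = 0$ (trivially true) and $t = W$ (vacuous) are covered by the same argument.
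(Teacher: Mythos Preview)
Your argument is correct. Both your proof and the paper's rely on the same underlying fact --- monotonicity of the rates across windows --- but the organization differs. The paper proceeds by a case split on whether the rate $\theta_{i,p}^H$ in the window containing $t$ is at least $u_i^H$: if so, every rate in $[t,W]$ is at least $u_i^H$ and the bound is immediate; if not, the paper algebraically manipulates Equation~\eqref{eqn:lem_transition1}, replacing the rates in the prefix by $\theta_{i,p}^H$ (valid by monotonicity) and then using $\theta_{i,p}^H < u_i^H$ to pass from the prefix to the suffix. Your contradiction argument is a cleaner, unified version of the same idea: by sandwiching both the prefix average and the suffix average against the single pivot rate $\theta_{i,\ell}^H$, you avoid the case distinction entirely. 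The paper's route is slightly more explicit about the algebra, while yours makes the ``suffix of a non-decreasing sequence has at least the global average'' principle transparent.
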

\begin{proof}
Consider the time interval $[t, \sum_{j: 1 \leq j < k_i} w_j]$, and let $\sum_{j: 1\leq j < p} w_j \leq t < \sum_{j: 1\leq j \leq p} w_j$ for some transition window $p < k_i$. $p$ denotes the transition window to which time instant $t$ belongs (see Figure~\ref{fig:early_transition_job}). We consider two cases depending on whether $\theta_{i,p}^H \geq u_i^H$ or not.

\emph{Case 1 ($\theta_{i,p}^H \geq u _i^H$):} In this case, using Equation~\eqref{eqn:lem_transition2} we get that each execution rate allocated in the time interval $[t, \sum_{j: 1 \leq j < k_i} w_j]$ is at least $u_i^H$. Therefore, it is true that the total execution assigned to $\tau_i$ is at least $(\sum_{j: 1 \leq j < k_i} w_j - t)u_i^H$.  

\emph{Case 2 ($\theta_{i,p}^H < u _i^H$):} Consider Equation~\eqref{eqn:lem_transition1}.
\begin{align*}
\sum_{j: 1\leq j < k_i} \theta_{i,j}^H \times w_j & \geq  u_i^H \times \sum_{j: 1\leq j < k_i} w_j
\end{align*}
\begin{align*}
&\mbox{(Using $\forall j: j < p, \theta_{i,p}^H \geq \theta_{i,j}^H$ from Equation~\eqref{eqn:lem_transition2})}\hspace{1cm}\\
\Rightarrow & \sum_{j: 1\leq j \leq p} \theta_{i,p}^H \times w_j + \sum_{j: p < j < k_i} \theta_{i,j}^H \times w_j \geq  \left(u_i^H \times \sum_{j: 1\leq j < k_i} w_j\right)\\
&\mbox{(Adding and subtracting $t \times \theta_{i,p}^H$)}\hspace{1cm}\\
\Rightarrow & t \times \theta_{i,p}^H + \left(\sum_{j: 1\leq j \leq p} w_j - t\right)\times \theta_{i,p}^H + \sum_{j: p < j < k_i} \theta_{i,j}^H \times w_j \geq  \left(u_i^H \times \sum_{j: 1\leq j < k_i} w_j\right) \\
\Rightarrow & \left(\sum_{j: 1\leq j \leq p} w_j - t\right) \times \theta_{i,p}^H + \sum_{j: p < j < k_i} \theta_{i,j}^H \times w_j
\geq  \left(u_i^H \times \sum_{j: 1\leq j < k_i} w_j\right)- t \times \theta_{i,p}^H\\
\Rightarrow & \left(\sum_{j: 1\leq j \leq p} w_j - t\right) \times \theta_{i,p}^H + \sum_{j: p < j < k_i} \theta_{i,j}^H \times w_j
\geq  u_i^H \times \left(\sum_{j: 1\leq j < k_i} w_j - t\right)\\
&\hspace{9cm} \mbox{(Using $\theta_{i,p}^H < u_i^H$)}
\end{align*}  
The left hand side in the above inequality denotes the exact total execution allocated to $\tau_i$ in the time interval $[t, \sum_{j: 1 \leq j < k_i} w_j]$. Therefore, the lemma holds true for this case as well.\qed
\end{proof}

Condition~\eqref{eqn:lem_transition2} is necessary because the tasks that are released immediately after mode switch (within the transition window) will not be schedulable otherwise. Although it seems to be a restriction on the model, it is in fact reasonable because one can expect a higher rate assignment as more carry-over jobs complete.

The following theorem presents a sufficient schedulability condition for the early transition jobs using the above lemma.

\begin{theorem}
\label{thm:transition_hi_mode}
An early transition job of task $\tau_i \in \tau_H$ can meet its deadline in the HI-mode if the following four conditions are met.
\begin{align}
& \label{eqn:transition1}
\sum_{j: 1\leq j < k_i} \theta_{i,j}^H \times w_j \geq  u_i^H \times \sum_{j: 1\leq j < k_i} w_j  \\
& \label{eqn:transition2}
\forall j: 1\leq j < k_i, \theta_{i,j}^H \leq \theta_{i,j+1}^H  \\
& \label{eqn:transition3}
\forall j: k_i \leq j \leq n_H, \theta_{i,j}^H \geq u_i^H  \\
& \label{eqn:transition4}
\theta_i^H  \geq u_i^H
\end{align}
\end{theorem}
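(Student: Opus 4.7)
The plan is to fix an arbitrary early transition job of $\tau_i$, released at some time $r$ with $0 \le r < \sum_{j: 1 \le j < k_i} w_j$, and show that it accrues at least $C_i^H = u_i^H \cdot T_i$ units of execution by its deadline at $r + T_i$. I will split the interval $[r, r+T_i]$ into two pieces at the boundary $\sum_{j: 1 \le j < k_i} w_j$, bound the execution in each piece below by $u_i^H$ times its length, and then add the two bounds.

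First I will verify that this split is well-defined, i.e., that $r + T_i \ge \sum_{j: 1 \le j < k_i} w_j$. This follows from the defining property of $k_i$ (Definition~\ref{def:k_i}): since $\sum_{j: 1 \le j < k_i} w_j < T_i - C_i^L/\theta_i^L < T_i$ and $r \ge 0$, the deadline $r+T_i$ lies in or beyond the $k_i^{th}$ window. For the left piece $[r, \sum_{j: 1 \le j < k_i} w_j]$, conditions~\eqref{eqn:transition1} and~\eqref{eqn:transition2} are exactly the hypotheses of Lemma~\ref{lem:1}, so the lemma yields total execution at least $\bigl(\sum_{j: 1 \le j < k_i} w_j - r\bigr)u_i^H$. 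For the right piece $[\sum_{j: 1 \le j < k_i} w_j, r+T_i]$, every execution rate applied to $\tau_i$ is either some $\theta_{i,j}^H$ with $k_i \le j \le n_H$ or the stable rate $\theta_i^H$; by conditions~\eqref{eqn:transition3} and~\eqref{eqn:transition4} each such rate is at least $u_i^H$, so the execution accrued over that piece is at least $\bigl(r + T_i - \sum_{j: 1 \le j < k_i} w_j\bigr)u_i^H$.

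Adding the two lower bounds gives total execution of at least $T_i \cdot u_i^H = C_i^H$ by the deadline, which establishes schedulability. The argument is symmetric in $r$, so it applies to every early transition job.

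The most delicate step is the left piece: the job is released inside the transition period rather than at the mode switch, so one cannot simply multiply rates by window durations. This is exactly the content of Lemma~\ref{lem:1}, whose non-decreasing hypothesis~\eqref{eqn:transition2} is what lets the ``missing'' execution in early low-rate windows be recovered by later high-rate windows, yielding a uniform $u_i^H$ lower bound on average rate from any starting point $r$ up to $\sum_{j: 1 \le j < k_i} w_j$. Once the lemma is invoked, the rest of the proof is an easy arithmetic combination with the $\theta_{i,j}^H \ge u_i^H$ and $\theta_i^H \ge u_i^H$ bounds for the tail.
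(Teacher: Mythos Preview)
Your proof is correct and follows essentially the same approach as the paper: split at the boundary $\sum_{j:1\le j<k_i} w_j$, invoke Lemma~\ref{lem:1} on the left piece using conditions~\eqref{eqn:transition1}--\eqref{eqn:transition2}, and use the rate lower bounds~\eqref{eqn:transition3}--\eqref{eqn:transition4} on the right piece to conclude that the total execution is at least $u_i^H\cdot T_i = C_i^H$. Your verification that the split point lies before the deadline (via $\sum_{j<k_i} w_j < T_i - C_i^L/\theta_i^L < T_i \le r+T_i$) is slightly more explicit than the paper's, but the argument is the same.
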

\begin{proof}
Consider an early transition job released at some time instant $t$ after the mode switch as shown in Figure~\ref{fig:early_transition_job}. By definition, $t < \sum_{j: 1 \leq j < k_i} w_j, t < T_i - C_i/\theta_i^L$ and the job's deadline is no earlier than $T_i - C_i/\theta_i^L$. The latter follows from the fact that the job is released after the mode switch, and hence its deadline cannot be earlier than the earliest possible deadline of a carry over job.

Consider the time interval $[t, \sum_{j: 1 \leq j < k_i} w_j]$. Using Equations~\eqref{eqn:transition1} and~\eqref{eqn:transition2} and Lemma~\ref{lem:1} we get that the total execution assigned to the job in this interval is at least $\big(\sum_{j: 1 \leq j < k_i} w_j - t\big)\times u_i^H$. In each succeeding window, the job is assigned an execution rate of at least $u_i^H$ (Equations~\eqref{eqn:transition3} and~\eqref{eqn:transition4}). 
Therefore, the total execution assigned to the job between its release time and deadline is at least $C_i^H$, and hence the job can meet its deadline.\qed
\end{proof}

\paragraph{Late transition jobs.} The only remaining jobs to consider are late transition jobs that are released in the transition period but no earlier than $\sum_{j: 1 \leq j < k_i} w_j$. Since these jobs only use the execution rates in windows at or after $k_i$, Equations~\eqref{eqn:transition3} and~\eqref{eqn:transition4} in Theorem~\ref{thm:transition_hi_mode} are sufficient to guarantee the schedulability of these jobs. If each of those rates are at least $u_i^H$, then the late transition jobs are guaranteed to execute for $C_i^H$ time units by their deadlines.

Thus, combining Propositions~\ref{prop:lo_sched}--\ref{prop:hi_feas} and Theorems~\ref{thm:co_hi_mode} and~\ref{thm:transition_hi_mode} we obtain a sufficient schedulability test for the multi-rate fluid scheduling model. Also, note that Proposition~\ref{prop:stable_jobs} is subsumed by Theorem~\ref{thm:transition_hi_mode}. Note that the test has polynomial time complexity because all the equations can be evaluated in time proportional to the number of tasks in $\tau$.

\begin{example} Consider the task set $\tau$ with multi-rate assignments as shown in Table~\ref{tab:example}. We show that this rate and window assignment is schedulable. We can easily check that Propositions~\ref{prop:lo_sched}--\ref{prop:hi_feas}, Equations~\eqref{eqn:co_nwc1} and~\eqref{eqn:co_nwc2} of Theorem~\ref{thm:co_hi_mode} and all the equations of Theorem~\ref{thm:transition_hi_mode} are satisfied. Now consider Equation~\eqref{eqn:co_wc} of Theorem~\ref{thm:co_hi_mode}: for $\tau_1$, $1 \times 2.1 \geq 4.9-2.8$; for $\tau_2$, $1 \times 2.1 + 1 \times 0.4 \geq 4-1.5$; for $\tau_3$, $0 \times 2.1 + 0.3 \times 0.4 + 0.5 \times 13.76 \geq 10.5-3.5$. Thus, Equation~\eqref{eqn:co_wc} is also satisfied, and hence the task set is schedulable.
\end{example}
\section{Properties of the multi-rate fluid model}
\label{sec:property}

In this section we derive some important properties of the proposed multi-rate fluid scheduling model and its schedulability test. A \emph{rate and window assignment algorithm} for the multi-rate fluid scheduling model is an algorithm that assigns values for all the execution rates in the model as well as for the $n_H$ transition window durations. We define the correctness criteria for such an algorithm as follows.  

\begin{definition}[Multi-Rate MC-Correctness]
\label{Multi-Rate MC-Correctness}
A rate and window assignment algorithm is called \textbf{Multi-Rate MC Correct} iff the following holds: Whenever the algorithm returns a set of execution rates ($\forall i: \theta_i^L, \theta_i^H$ and $\forall i,j: \theta_{i,j}^H$) and transition window durations ($\forall j: w_j$), these rates and window durations satisfy Propositions~\ref{prop:lo_sched}--\ref{prop:hi_feas} and Theorems~\ref{thm:co_hi_mode} and~\ref{thm:transition_hi_mode}. 
\end{definition}

\paragraph{Dual-rate generalization and speed-up optimality.} An interesting property of the proposed multi-rate model and schedulability test is that it generalizes the dual-rate model and schedulability test discussed in Sect.~\ref{sec:system_model}.
We can obtain the dual-rate model by setting $\forall j: \theta_{i,j}^H = \theta_i^H$ for each HI-task $\tau_i$. 
For this case, the following lemma shows that the multi-rate schedulability test derived in Sect.~\ref{sec:test} is equivalent to the dual-rate schedulability test.

\begin{lemma}[Dual-rate Generalization]
If $\forall i: 1\leq i \leq n_H, \forall j: 1\leq j \leq n_H, \theta_{i,j}^H = \theta_i^H$ in the multi-rate fluid model, then Propositions~\ref{prop:lo_sched}--\ref{prop:hi_feas} and Theorems~\ref{thm:co_hi_mode} and~\ref{thm:transition_hi_mode} are satisfied if and only if Equations~\eqref{eqn:LO_task_feasibility}--\eqref{eqn:HI_mode_feasibility} hold.
\label{lem:generalization}
\end{lemma}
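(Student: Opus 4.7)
The plan is to verify the biconditional condition by condition, under the substitution $\theta_{i,j}^{H}=\theta_i^{H}$ for all $i,j$, matching each multi-rate condition to a corresponding dual-rate equation (or showing it trivializes). First I would list the direct correspondences: Proposition~\ref{prop:lo_sched} is literally Eq.~\eqref{eqn:LO_task_feasibility}; Proposition~\ref{prop:lo_feas} is literally Eq.~\eqref{eqn:LO_mode_feasibility}; both inequalities in Proposition~\ref{prop:hi_feas} collapse to $\sum_{\tau_i\in\tau_H}\theta_i^{H}\le m$, which is Eq.~\eqref{eqn:HI_mode_feasibility}. Inside Theorem~\ref{thm:co_hi_mode}, Eq.~\eqref{eqn:co_nwc1} becomes $\theta_i^{L}\le\theta_i^{H}$ (since every $\theta_{i,j}^{H}$ is $\theta_i^{H}$), which is Eq.~\eqref{eqn:co_nwc2} and matches Eq.~\eqref{eqn:HI_task_feasibility2}. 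Inside Theorem~\ref{thm:transition_hi_mode}, Eq.~\eqref{eqn:transition2} becomes the tautology $\theta_i^{H}\le\theta_i^{H}$, while Eqs.~\eqref{eqn:transition3} and~\eqref{eqn:transition4} both collapse to $\theta_i^{H}\ge u_i^{H}$, which is already part of the rate-range assumption $\theta_i^{H}\in[u_i^{H},1]$ built into the model. Eq.~\eqref{eqn:transition1} reduces to $\theta_i^{H}\sum_{j<k_i}w_j\ge u_i^{H}\sum_{j<k_i}w_j$, equivalent to the same $\theta_i^{H}\ge u_i^{H}$ (and trivial if the sum is zero).

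The main nontrivial step is showing that Eq.~\eqref{eqn:co_wc} under the substitution simplifies to Eq.~\eqref{eqn:HI_task_feasibility1}. This is where I expect the only real algebra to live. The key observation is that in both cases of the case-split in Theorem~\ref{thm:co_hi_mode} ($k_i\le n_H$ and $k_i=n_H+1$), the coefficient $R_i$ equals $\theta_i^{H}$ under the substitution. Hence the LHS of Eq.~\eqref{eqn:co_wc} telescopes:
\begin{equation*}
\theta_i^{H}\sum_{j<k_i}w_j + \theta_i^{H}\!\left(T_i-\tfrac{C_i^{L}}{\theta_i^{L}}-\sum_{j<k_i}w_j\right) \;=\; \theta_i^{H}\!\left(T_i-\tfrac{C_i^{L}}{\theta_i^{L}}\right).
\end{equation*}
The requirement $\theta_i^{H}(T_i-C_i^{L}/\theta_i^{L})\ge C_i^{H}-C_i^{L}$, after dividing by $T_i$ and rearranging using $u_i^{L}=C_i^{L}/T_i$ and $u_i^{H}=C_i^{H}/T_i$, is algebraically equivalent to
\begin{equation*}
\frac{u_i^{L}}{\theta_i^{L}}+\frac{u_i^{H}-u_i^{L}}{\theta_i^{H}}\;\le\;1,
\end{equation*}
which is exactly Eq.~\eqref{eqn:HI_task_feasibility1}. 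Both directions of this algebraic manipulation are reversible, giving the iff.

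Putting everything together, under the substitution the complete set of multi-rate conditions is equivalent to the conjunction of Eqs.~\eqref{eqn:LO_task_feasibility}--\eqref{eqn:HI_mode_feasibility} (with the $\theta_i^{H}\ge u_i^{H}$ piece absorbed by the model's rate-range assumption). The main obstacle I anticipate is handling the case-split on $k_i$ in Eq.~\eqref{eqn:co_wc} carefully and making clear that the substitution $\theta_{i,j}^{H}=\theta_i^{H}$ kills the distinction between $k_i\le n_H$ and $k_i=n_H+1$ so that $R_i=\theta_i^{H}$ uniformly; once this is noted, the telescoping and the algebra reducing to Eq.~\eqref{eqn:HI_task_feasibility1} are routine.
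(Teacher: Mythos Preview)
Your proposal is correct and follows essentially the same condition-by-condition reduction as the paper, including the key telescoping of Eq.~\eqref{eqn:co_wc} to Eq.~\eqref{eqn:HI_task_feasibility1}. One small point: where you dispose of $\theta_i^{H}\ge u_i^{H}$ (needed for Eqs.~\eqref{eqn:transition1},~\eqref{eqn:transition3},~\eqref{eqn:transition4}) by appealing to the ``rate-range assumption $\theta_i^{H}\in[u_i^{H},1]$ built into the model,'' the paper instead \emph{derives} this inequality from Eqs.~\eqref{eqn:HI_task_feasibility1} and~\eqref{eqn:HI_task_feasibility2} alone (substituting $\theta_i^{L}\le\theta_i^{H}$ into Eq.~\eqref{eqn:HI_task_feasibility1} yields $u_i^{H}/\theta_i^{H}\le 1$). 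Since the lemma is phrased purely in terms of Eqs.~\eqref{eqn:LO_task_feasibility}--\eqref{eqn:HI_mode_feasibility}, and the multi-rate model itself does not impose that range constraint, this derivation is the cleaner route; it is a one-line fix you should make explicit.
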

\begin{proof}
The conditions for LO-mode platform feasibility and LO-task schedulability are identical for the dual-rate and multi-rate models (Proposition~\ref{prop:lo_sched} and Equation~\eqref{eqn:LO_task_feasibility}, Proposition~\ref{prop:lo_feas} and Equation~\eqref{eqn:LO_mode_feasibility}).

We now consider the HI-task schedulability and HI-mode platform feasibility tests. 

\paragraph{(Dual-rate $\Rightarrow$ Multi-rate):} Suppose the tests are satisfied for the dual-rate model. That is, Equations~\eqref{eqn:HI_task_feasibility1},~\eqref{eqn:HI_task_feasibility2} and~\eqref{eqn:HI_mode_feasibility} hold.

By substituting Equation~\eqref{eqn:HI_task_feasibility2} in Equation~\eqref{eqn:HI_task_feasibility1} we get,
\begin{align}
\nonumber & 1 \geq \frac{u_i^H - u_i^L}{\theta_i^H} + \frac{u_i^L}{\theta_i^L} & \mbox{Equation~\eqref{eqn:HI_task_feasibility1}} \\
\nonumber \Leftrightarrow & 1 \geq \frac{u_i^H - u_i^L}{\theta_i^H} + \frac{u_i^L}{\theta_i^H} & \mbox{(Using $\theta_i^L \leq \theta_i^H$)} \\
\Leftrightarrow & \theta_i^H \geq u_i^H \label{eqn:1} &
\end{align}

First we consider Theorem~\ref{thm:transition_hi_mode}. Equation~\eqref{eqn:transition4} in this theorem is identical to Equation~\eqref{eqn:1} above. Equations~\eqref{eqn:transition1},~\eqref{eqn:transition2} and~\eqref{eqn:transition3} are all satisfied because $\theta_{i,j}^H = \theta_i^H \geq u_i^H$ for all $i$ and $j$ based on the assumption of the lemma and Equation~\eqref{eqn:1}.

Next we consider Proposition~\ref{prop:hi_feas}. Equation~\eqref{eqn:pure_hi} in this proposition is identical to Equation~\eqref{eqn:HI_mode_feasibility}. Also, Equation~\eqref{eqn:transition_hi} reduces to $\sum_{\tau_i\in\tau_H} \theta_i^H \leq m$, which is also satisfied based on Equation~\eqref{eqn:HI_mode_feasibility}. Thus the proposition is satisfied.

Finally, we consider Theorem~\ref{thm:co_hi_mode}. Equation~\eqref{eqn:co_wc} in Theorem~\ref{thm:co_hi_mode} is 

 \begin{align*}
& \sum_{j: 1 \leq j < k_i} \theta_{i,j}^H \times w_j + \theta_i^H \times \left (T_i - C_i^L/\theta_i^L - \sum_{j: 1 \leq j < k_i} w_j \right) \\
& \geq C_i^H - C_i^L
\end{align*}
\begin{align*}
\Leftrightarrow & \theta_i^H(T_i - C_i^L/\theta_i^L) \geq C_i^H - C_i^L &\mbox{(Using $\theta_{i,j}^H = \theta_i^H, \forall i,j$)} \\
\Leftrightarrow & \theta_i^HT_i(1 - u_i^L/\theta_i^L) \geq T_i(u_i^H - u_i^L) \\
\Leftrightarrow & 1- u_i^L/\theta_i^L \geq \frac{u_i^H - u_i^L}{\theta_i^H}\\
\Leftrightarrow & 1 \geq \frac{u_i^H - u_i^L}{\theta_i^H} + \frac{u_i^L}{\theta_i^L} &\mbox{(Equation~\eqref{eqn:HI_task_feasibility1})}
\end{align*}

Equations~\eqref{eqn:co_nwc1} and~\eqref{eqn:co_nwc2} reduce to $\theta_i^L \leq \theta_i^H$ using $\theta_{i,j}^H = \theta_i^H, \forall i,j$. This is identical to Equation~\eqref{eqn:HI_task_feasibility2}. Thus, we have shown that satisfaction of the dual-rate schedulability test also implies satisfaction of the multi-rate schedulability test. 

\paragraph{(Multi-rate $\Rightarrow$ Dual-rate):} This case is trivial because Proposition~\ref{prop:hi_feas}, and Theorems~\ref{thm:co_hi_mode} and~\ref{thm:transition_hi_mode} subsume Equations~\eqref{eqn:HI_task_feasibility1},~\eqref{eqn:HI_task_feasibility2} and~\eqref{eqn:HI_mode_feasibility}.\qed 
\end{proof}

Note that, as discussed in the introduction, dual-rate assignment strategy MC-Fluid has been shown to be speed-up optimal with a speed-up bound of $4/3$~\cite{baruah_mcf}. The above generalization lemma is then important because it can be used to show that rate and window assignment algorithms for the multi-rate model satisfying some properties are also speed-up optimal. In particular, we will show that rate and window assignment algorithms for the multi-rate model that dominate MC-Fluid in terms of schedulability are also speed-up optimal with a speed-up bound of $4/3$~\footnote{Similar proof-technique has been used in~\cite{baruah_mcf}, in which dominance of MC-Fluid over MCF has been used to derive a speed-up bound for MC-Fluid.}.

Since MC-Fluid has a speed-up bound of $4/3$, this means any MC task system satisfying the condition $\max \{ U_L^L+U_H^L, U_H^H \} \leq 3/4$ and $\max_i \{ u_i^L, u_i^H \} \leq 3/4$ is schedulable under MC-Fluid. Then, if some multi-rate and window assignment algorithm A dominates MC-Fluid, it implies that any MC task system satisfying these conditions is also schedulable under A. This directly implies a speed-up bound of $4/3$ for A, because any feasible MC task system will satisfy the above conditions when the speed of each processor is increased by a factor of $4/3$. Based on this intuition, the following lemma derives properties for A that ensures A dominates MC-Fluid.     

\begin{lemma}[Dominance and Speed-up Bound]
\label{lemma:speed-up}
Suppose a rate and window assignment algorithm A for the multi-rate fluid model is guaranteed to return some feasible assignment as long as there is at least one assignment satisfying,
\begin{itemize}
\item $\forall i: 1\leq i \leq n_H, \forall j: 1\leq j \leq n_H, \theta_{i,j}^H = \theta_i^H$, 
\item $\forall j: 1 \leq j \leq n_H, w_j = 0$, and
\item Propositions~\ref{prop:lo_sched}--\ref{prop:hi_feas} and Theorems~\ref{thm:co_hi_mode} and~\ref{thm:transition_hi_mode}.
\end{itemize}
Then Algorithm A dominates MC-Fluid and has a speed-up bound of $4/3$.
\end{lemma}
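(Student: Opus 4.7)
The plan is to prove the lemma in two stages: first establish that Algorithm A dominates MC-Fluid, then lift the known $4/3$ speed-up bound of MC-Fluid through this dominance to obtain the same bound for A.

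For the dominance step, I would argue as follows. Suppose MC-Fluid schedules some task system $\tau$, yielding rates $\{\theta_i^L\}$ and $\{\theta_i^H\}$ that satisfy Equations~\eqref{eqn:LO_task_feasibility}--\eqref{eqn:HI_mode_feasibility}. Construct a candidate multi-rate assignment by retaining these LO- and HI-mode rates and setting $\theta_{i,j}^H = \theta_i^H$ for every $i,j$ together with $w_j = 0$ for every $j$. By Lemma~\ref{lem:generalization} (dual-rate generalization), this assignment satisfies Propositions~\ref{prop:lo_sched}--\ref{prop:hi_feas} and Theorems~\ref{thm:co_hi_mode} and~\ref{thm:transition_hi_mode}. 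Since the hypothesis of the present lemma guarantees that A returns some feasible assignment whenever at least one such assignment exists that meets these three bulleted requirements, A must therefore schedule $\tau$ as well. Consequently A dominates MC-Fluid.

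For the speed-up step, I would appeal to the result of~\cite{baruah_mcf} that MC-Fluid is speed-up optimal with bound $4/3$: every MC task system feasible on $m$ unit-speed processors is schedulable by MC-Fluid on $m$ processors each running at speed $4/3$. Composing this with the dominance just established shows that A also schedules any such speed-scaled task system, which is precisely the statement that A has a speed-up bound of $4/3$.

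The main technical care I foresee is confirming that the degenerate choice $w_j = 0$ does not break the multi-rate test. Specifically, one should verify that Definition~\ref{def:k_i} remains well-posed: with all $w_j = 0$ the condition $\sum_{j < k_i} w_j < T_i - C_i^L/\theta_i^L$ becomes $0 < T_i - C_i^L/\theta_i^L$, which holds whenever $\theta_i^L > u_i^L$; this is automatic for every HI-task that actually needs a HI-mode boost under MC-Fluid. The largest such index then forces $k_i = n_H + 1$, so that Equations~\eqref{eqn:co_nwc1} and~\eqref{eqn:transition3} hold vacuously, Equations~\eqref{eqn:transition1} and~\eqref{eqn:transition2} become $0 \geq 0$ and an equality of equal terms, and Equation~\eqref{eqn:co_wc} collapses to $\theta_i^H(T_i - C_i^L/\theta_i^L) \geq C_i^H - C_i^L$, i.e.\ Equation~\eqref{eqn:HI_task_feasibility1}. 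This confirms that Lemma~\ref{lem:generalization} can be applied without further side conditions, closing the argument.
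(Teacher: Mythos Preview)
Your proposal is correct and follows essentially the same route as the paper: build the degenerate multi-rate assignment from MC-Fluid's dual-rate output, invoke Lemma~\ref{lem:generalization} to verify it passes the multi-rate test, apply the hypothesis on~A to conclude dominance, and then inherit the $4/3$ speed-up bound. Your final paragraph verifying that $w_j=0$ is harmless is extra caution not present in the paper's proof---Lemma~\ref{lem:generalization} already establishes the needed equivalence using only $\theta_{i,j}^H=\theta_i^H$, independent of the window lengths---so you can safely drop it (and note that the edge case $\theta_i^L=u_i^L$ you flag forces $u_i^H=u_i^L$ via Equation~\eqref{eqn:HI_task_feasibility1}, hence is vacuous).
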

\begin{proof}
Suppose a MC task system is feasible under MC-Fluid. Then, let $\theta_i^L$ and $\theta_i^H$ denote the execution rates for each task $\tau_i \in \tau_H$ assigned by MC-Fluid that satisfies Equations~\eqref{eqn:LO_task_feasibility}--\eqref{eqn:HI_mode_feasibility}. Now consider the multi-rate assignment $\forall i: 1\leq i \leq n_H, \forall j: 1\leq j \leq n_H, \theta_{i,j}^H = \theta_i^H$, $\forall j: 1 \leq j \leq n_H, w_j = 0$ and $\theta_i^L$ identical to the dual-rate assignment. From Lemma~\ref{lem:generalization} we know that this rate assignment satisfies Propositions~\ref{prop:lo_sched}--\ref{prop:hi_feas} and Theorems~\ref{thm:co_hi_mode} and~\ref{thm:transition_hi_mode}. Thus, there is at least one rate and window assignment that satisfies all the above three conditions of the lemma. From the assumption of the lemma, we then know that algorithm A is guaranteed to return a feasible multi-rate and window assignment. This shows that algorithm A dominates MC-Fluid in terms of schedulability. Since MC-Fluid has a known speed-up bound of $4/3$, by definition algorithm A also has a speed-up bound of $4/3$. \qed
\end{proof}

It has been shown that no non-clairvoyant scheduling algorithm can have a speed-up bound lower than $4/3$ for scheduling dual-criticality implicit-deadline sporadic task systems on multiprocessor platforms~\cite{baruah_edfvd2}. The speed-up optimality of the multi-rate and window assignment algorithm A with a speed-up bound of $4/3$ follows by combining Lemma~\ref{lemma:speed-up} and the fact that no non-clairvoyant algorithm can have a speed-up bound smaller than $4/3$.

\paragraph{Mapping to non-fluid platform.} The multi-rate fluid scheduling model, similar to other fluid models, assumes that a processing core can be fractionally assigned to tasks. Since this is not possible on a real (non-fluid) platform, it is important to map the fluid execution rates to a non-fluid scheduling policy, ideally without any loss in schedulability. In~\cite{lee_mcfluid}, the dual-rate fluid model has been successfully mapped without any loss in schedulability to the DP-Fair non-fluid scheduling algorithm (short for Deadline-Partitions Fair)~\cite{funk_dpfair}.
We now show that the multi-rate fluid model can also be similarly mapped to DP-Fair without any loss in schedulability.
In the classic non-MC multiprocessor scheduling, DP-Fair scheduling algorithm has been used to map a fluid execution model with single execution rate for each task to a non-fluid schedule without any loss in schedulability~\cite{funk_dpfair}. The main intuition behind this mapping is as follows. The entire scheduling window is partitioned based on job deadlines. Between any two consecutive job deadlines, it is ensured that the total allocation to a task is equal to the length of the partition multiplied by its assigned execution rate. Thus, \emph{tasks are guaranteed processor allocations proportional to their execution rates at every job deadline}, and hence schedulability is preserved. Between two consecutive job deadlines, the processor share allocated to each task can be scheduled using any non-fluid optimal scheduling policy such as McNaughton's algorithm~\cite{mcnaughton}. To handle sporadic job releases, the allocations in each partition are scheduled using a work-conserving algorithm, and they are re-computed upon job releases. Note that this scheduling policy can be implemented fully online by computing allocations for pending jobs between the current time instant (coinciding with a job release or deadline) and the next earliest job deadline.

In the multi-rate fluid model, apart from job deadlines, another important event is the mode switch instant. A mode switch is triggered when a HI-job executes for its LO-WCET ($C_i^L$) and does not signal completion. For correctness of the multi-rate schedulability test derived in Sect.~\ref{sec:test}, it is essential to ensure that each HI-job of a task $\tau_i \in \tau_H$ completes its LO-WCET execution $C_i^L$ no later than $C_i^L/\theta_i^L$ time units from its release. This is necessary to guarantee the property that any job deadline of task $\tau_i$ in HI-mode is no earlier than $T_i - C_i/\theta_i^L$ time units from the mode switch, which is used in Theorems~\ref{thm:co_hi_mode} and~\ref{thm:transition_hi_mode}. Hence, when mapping the multi-rate fluid model to DP-Fair scheduling, in addition to ensuring task allocation fairness at job deadlines, we also need to ensure this fairness at \emph{worst case mode switch instants} for each HI-job. That is, we need to ensure fairness at $C_i^L/\theta_i^L$ time units from each job release instant for each HI-task $\tau_i$.

After the mode switch, one simple way to map the multi-rate model to DP-Fair scheduling is to ensure allocation fairness at the boundary of each transition window $j$ ($1 \leq j \leq n_H$), in addition to job deadlines. Of course, this is not necessary, and schedulability would be guaranteed even if fairness is only ensured at job deadlines. Suppose a partition (interval between two consecutive job deadlines or between the mode switch and the earliest job deadline) spans more than one transition window. Let $[t_1, t_2]$ denote this partition, $j_1 (\leq n_H)$ denote a transition window that contains $t_1$, and $j_2 (\leq n_H)$ denote another transition window that contains $t_2$. Then, for each HI-task $\tau_i$ it is sufficient to ensure that its total allocation in this partition is equal to $(\sum_{1 \leq j \leq j_1} w_j - t_1)\theta_{i,j_1}^H + (\sum_{j_1 < j < j_2} w_j)\theta_{i,j}^H + (t_2 - \sum_{1 \leq j < j_2} w_j)\theta_{i,j_2}^H$. 

Thus, the multi-rate MC fluid scheduling model can be mapped to the DP-Fair scheduling algorithm using the following strategy.

\begin{definition}[DP-Fair Mapping]
\label{def:dp_fair}
Given a rate and window assignment in the multi-rate fluid model, consider the following task allocations.
\begin{itemize}
\item \textbf{LO-mode mapping:} Partition the scheduling window based on job deadlines and worst case mode switch instants ($C_i^L/\theta_i^L$ time units from job release for each HI-task $\tau_i$). In each partition $p$ having length $L_p$, allocate an execution of $\theta_i^L \times L_p$ time units for each task $\tau_i$ that has an unfinished job at the beginning of the partition.  
\item \textbf{Transition period mapping:} After mode switch drop all the jobs of LO-tasks. Partition the transition period (within $\sum_{1\leq j \leq n_H} w_j$ time units from the mode switch) based on HI-job deadlines. Consider a partition $p$ spanning the interval $[t_1, t_2]$, such that $j_1 (\leq n_H)$ denotes a transition window that contains $t_1$, and $j_2 (\leq n_H)$ denotes another transition window that contains $t_2$. Allocate an execution of $(\sum_{1 \leq j \leq j_1} w_j - t_1)\theta_{i,j_1}^H + (\sum_{j_1 < j < j_2} w_j)\theta_{i,j}^H + (t_2 - \sum_{1 \leq j < j_2} w_j)\theta_{i,j_2}^H$ time units for each HI-task $\tau_i$ that has an unfinished job at $t_1$.
\item \textbf{HI-mode mapping:} After $\sum_{1\leq j \leq n_H} w_j$ time units from the mode switch, partition the scheduling window based on HI-job deadlines. In each partition $p$ having length $L_p$, allocate an execution of $\theta_i^H \times L_p$ time units for each HI-task $\tau_i$ that has an unfinished job at the beginning of the partition.
\end{itemize}
In each partition, schedule the task allocations using any work-conserving optimal algorithm.
\end{definition}

The following lemma records the fact that the above mapping is optimal, in the sense that there is no loss in schedulability.

\begin{lemma}
Consider a set of execution rates and window durations that satisfy Propositions~\ref{prop:lo_sched}--\ref{prop:hi_feas} and Theorems~\ref{thm:co_hi_mode} and~\ref{thm:transition_hi_mode}. If these rates are mapped to the DP-Fair scheduling algorithm using Definition~\ref{def:dp_fair}, then in the resulting schedule all job deadlines are met.  
\end{lemma}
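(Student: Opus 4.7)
The overall strategy is to reduce the claim to the well-known optimality of DP-Fair/McNaughton-style scheduling within each partition, and then show that the per-partition allocations prescribed by Definition~\ref{def:dp_fair} are both (i) feasible to realize on the $m$ cores in the partition, and (ii) in aggregate at every job deadline, no smaller than what the underlying multi-rate fluid schedule provides. Given these two facts, every job receives at least the same total service by its deadline as it would in the fluid schedule, so the schedulability guarantees already established in Propositions~\ref{prop:lo_sched}--\ref{prop:hi_feas} and Theorems~\ref{thm:co_hi_mode} and~\ref{thm:transition_hi_mode} transfer directly to the non-fluid schedule.

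First, I would verify per-partition feasibility. For any partition $p$ of length $L_p$ in the LO-mode, each active task is given exactly $\theta_i^L L_p$ units of work; since $\theta_i^L \le 1$ (tasks are sequential) this is at most $L_p$, and summing over $\tau_i \in \tau$ gives at most $m L_p$ by Proposition~\ref{prop:lo_feas}. For an HI-mode partition after the transition, the same argument applies with $\theta_i^H$ and Equation~\eqref{eqn:pure_hi}. For a transition-period partition $[t_1,t_2]$ spanning windows $j_1,\dots,j_2$, the allocation is a convex combination of the rates $\theta_{i,j}^H$ over that interval, each $\le 1$, whose sum across $\tau_i\in\tau_H$ is bounded by $m(t_2-t_1)$ using Equation~\eqref{eqn:transition_hi} window-by-window. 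Thus a work-conserving optimal algorithm (e.g.\ McNaughton's) can schedule each partition without missing the partition's end time, so no task exceeds the partition in real time.

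Second, I would show that, at every event that matters for schedulability (any job deadline, plus the worst-case mode switch instant $C_i^L/\theta_i^L$ for each HI-job), the cumulative DP-Fair allocation to a job is at least the cumulative fluid allocation at the same instant. In the LO-mode this is immediate: partitioning at both deadlines and worst-case mode switch instants forces the allocation over $[0,C_i^L/\theta_i^L]$ to be exactly $C_i^L$, which is what Theorems~\ref{thm:co_hi_mode} and~\ref{thm:transition_hi_mode} tacitly rely on, and Proposition~\ref{prop:lo_sched} yields the LO-mode deadline guarantee. In the transition period, by construction the allocation over any $[t_1,t_2]$ coincides with the fluid allocation $\int_{t_1}^{t_2} \theta_i(\cdot)\,ds$, so carry-over, early-transition, late-transition, and stable jobs all inherit the service totals guaranteed by Theorems~\ref{thm:co_hi_mode}, \ref{thm:transition_hi_mode}, and Proposition~\ref{prop:stable_jobs} at their deadlines.

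The main obstacle, and the delicate step, is the handling of a transition-period partition that straddles several windows: here the fluid allocation is piecewise constant with different slopes, but DP-Fair delivers only the total over the partition, and a work-conserving scheduler may re-order this execution within $[t_1,t_2]$. I expect to argue that re-ordering within a single partition is harmless because no job deadline lies strictly inside the partition (the partition boundaries are exactly HI-job deadlines together with the mode-switch instant and the end of the transition period), so only the total allocation at $t_2$ matters; combined with Lemma~\ref{lem:1}, which already handles the analogous subtlety in the fluid proof of Theorem~\ref{thm:transition_hi_mode}, this completes the argument. Concluding, every job receives by its deadline at least the service guaranteed by the fluid analysis, so all deadlines are met.
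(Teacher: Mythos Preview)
Your proposal is correct and follows essentially the same approach as the paper's proof: verify per-partition feasibility via Propositions~\ref{prop:lo_feas} and~\ref{prop:hi_feas} so that a work-conserving optimal algorithm (McNaughton) can realize the allocations, use the LO-mode partitioning at $C_i^L/\theta_i^L$ to guarantee the worst-case mode-switch property needed by Theorems~\ref{thm:co_hi_mode} and~\ref{thm:transition_hi_mode}, and then observe that the DP-Fair allocation matches the fluid allocation at every partition boundary (hence every deadline). The paper's proof is considerably terser and omits the explicit McNaughton-feasibility check ($\theta\le 1$ per task, sum $\le m$) and the straddling-window discussion you supply, but the skeleton is identical; your closing appeal to Lemma~\ref{lem:1} is unnecessary, since that lemma is already baked into Theorem~\ref{thm:transition_hi_mode} and all you need here is that DP-Fair reproduces the fluid totals at deadlines.
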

\begin{proof}
In the LO-mode, each job of a task $\tau_i$ is guaranteed to receive execution proportional to its assigned rate $\theta_i^L$ within its deadline. 
This follows from Proposition~\ref{prop:lo_feas} and the fact that allocations in each partition are scheduled using a work-conserving optimal algorithm. Then, using Proposition~\ref{prop:lo_sched} we can conclude that all the job deadlines in the LO-mode are met. 

Each job of a HI-task $\tau_i$ is guaranteed to receive at least $C_i^L$ time units of execution within $C_i^L/\theta_i^L$ time units from its release. Therefore, the deadline of any job of $\tau_i$ in the HI-mode is no earlier than $T_i-C_i^L/\theta_i^L$ time units from the mode switch instant. 

In the HI-mode, each job is guaranteed to receive execution proportional to the assigned rates within its deadline (using Proposition~\ref{prop:hi_feas} and the fact that allocations are scheduled using a work-conserving optimal algorithm). Therefore, using Theorems~\ref{thm:co_hi_mode} and~\ref{thm:transition_hi_mode} we can conclude that all job deadlines in the HI-mode will be met.\qed
\end{proof}
\section{Multi-rate Assignment Strategy}
\label{sec:rates}
In this section, we present an algorithm called \emph{SOMA} (short for Speed-up Optimal Multi-rate Assignment) to determine the execution rates and window durations for the multi-rate model.  A convex optimization based solution, similar to the dual-rate algorithm MC-Fluid, is desirable because it can optimize the assignments. However, one major hurdle towards using an optimization framework for the multi-rate model is that the earliest completion window parameters, $k_i$s in Definition~\ref{def:k_i}, control the number of terms involved in the summations used in the schedulability test. Since $k_i$s are dependent on the execution rates and window durations, this would imply that the constraints of the optimization problem are no longer fixed. To address this issue, we first use a simple heuristic to fix these $k_i$s and then formulate the optimization problem.

Algorithm~\ref{algo:1} presents SOMA. Since it is sufficient for all jobs of LO-tasks to execute at their minimum required rate, we assign $\theta_i^L = u_i^L$ for all $\tau_i  \in \tau_L$. The algorithm first sorts all the HI-tasks in increasing order of the parameter $T_i - C_i^L/u_i^H$. Note that $u_i^H$ denotes the maximum possible rate that a HI-task $\tau_i$ can be assigned in the LO-mode (easily seen by combining equations in Theorems~\ref{thm:co_hi_mode} and~\ref{thm:transition_hi_mode}). Then $T_i - C_i^L/u_i^H$ denotes the earliest completion time for any carry over job of this task, assuming the maximum possible rate in the LO-mode. Hence, in any feasible assignment, a HI-task $\tau_i$ with a smaller $T_i - C_i^L/u_i^H$, will most likely end up with a smaller $k_i$ when compared to a HI-task $\tau_j$ with a larger $T_j - C_j^L/u_j^H$. We use this intuition to fix the values for $k_i$s, and thus address the challenge discussed above. Inspite of this restriction, we show that SOMA has a speed-up bound of $4/3$ (Lemma~\ref{cor:soma_speed_up} below) and has comparable performance to a brute-force technique that tries all the $n_H!$ possible combinations for the $k_i$s (Sect.~\ref{sec:experiments}).

\begin{algorithm}
\begin{algorithmic}[1]
\Require $\tau$, $m$
\State For each $\tau_i \in \tau_L$ assign $\theta_i^L=u_i^L$.
\State Sort $\tau_H$ in increasing order of the parameter $T_i - (C_i^L/u_i^H)$. 
\State Solve the optimization problem in Definition~\ref{def:mr_opt}, assuming task indices are sorted based on the above sorting order.
\If{Optimization returns an assignment}
	\If{$\sum_{\tau_i \in \tau} \theta_i^L \leq m$}
		\State Declare Success
	\Else
		\State Declare Failure
	\EndIf
\Else
	\State Declare Failure
\EndIf
\end{algorithmic}
\caption{\label{algo:1}SOMA rate and window assignment strategy}
\end{algorithm}

SOMA then solves the convex optimization problem given in Definition~\ref{def:mr_opt}. Using a fixed value for all the $k_i$s ($k_i = i$), where we assume task indices are sorted based on increasing value of the parameter $T_i - (C_i^L/u_i^H)$, this optimization encodes Propositions~\ref{prop:lo_sched} and~\ref{prop:hi_feas} and Theorems~\ref{thm:co_hi_mode} and~\ref{thm:transition_hi_mode} of the schedulability test in its constraints. Its objective is to minimize the total LO-mode execution rates. If it returns a feasible assignment for the rates and window durations, we check whether this returned assignment satisfies the remaining condition for schedulability, i.e., Proposition~\ref{prop:lo_feas}.

\begin{definition}[Multi-rate convex optimization]
\label{def:mr_opt}
Suppose $k_i = i$ for each HI-task $\tau_i \in \tau_H$.
\begin{equation*}
\begin{split}
& minimize \hspace{10pt} \sum_{\tau_i\in\tau_H} \theta_i^L\\
subject~to,\\
& Propositions~\ref{prop:lo_sched}-\ref{prop:hi_feas}\\
& Theorems~\ref{thm:co_hi_mode}~and~\ref{thm:transition_hi_mode} \\
& \forall i: 1 \leq i \leq n_H, \forall j: 1 \leq j \leq n_H, \hspace{5pt} \theta_{i,j}^H  \leq 1\\
& \forall i: 1 \leq i \leq n_H, \hspace{5pt} \theta_{i}^H  \leq 1\\
& \forall j: 1 \leq j \leq n_H, \hspace{5pt} w_j \geq 0
\end{split}
\end{equation*}
\end{definition}

It is easy to see that \emph{SOMA is a Multi-Rate MC-Correct} assignment strategy. Any rate and window assignment returned by SOMA satisfies Propositions~\ref{prop:lo_sched}-\ref{prop:hi_feas} and Theorems~\ref{thm:co_hi_mode} and~\ref{thm:transition_hi_mode} because of the constraints of the optimization problem as well as the check in Step $5$ of Algorithm~\ref{algo:1}.
The following lemma shows that SOMA has a speed-up bound of $4/3$.

\begin{lemma}
Algorithm SOMA has a speed-up bound of $4/3$.
\label{cor:soma_speed_up}
\end{lemma}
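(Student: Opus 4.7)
The plan is to invoke Lemma~\ref{lemma:speed-up}, which reduces the speed-up claim to showing that SOMA is guaranteed to return a feasible rate and window assignment whenever there exists some \emph{degenerate} multi-rate assignment, that is an assignment satisfying $\theta_{i,j}^H = \theta_i^H$ for all $i,j$ and $w_j = 0$ for all $j$, together with Propositions~\ref{prop:lo_sched}--\ref{prop:hi_feas} and Theorems~\ref{thm:co_hi_mode} and~\ref{thm:transition_hi_mode}. Since SOMA is already Multi-Rate MC-Correct (any output it declares as success satisfies the full schedulability test), the only thing to verify is that its convex program in Definition~\ref{def:mr_opt} has a feasible solution in this case, and that the minimizer it returns satisfies the post-check $\sum_{\tau_i \in \tau} \theta_i^L \leq m$ in Step~5.

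The central observation I would prove first is that for a degenerate assignment, the constants $k_i$ that SOMA hard-codes to $k_i = i$ become irrelevant to constraint satisfaction. Concretely, I would plug $\theta_{i,j}^H = \theta_i^H$ and $w_j = 0$ into each schedulability condition appearing as a constraint in Definition~\ref{def:mr_opt}: Equation~\eqref{eqn:co_wc} collapses (independently of $k_i$) to $\theta_i^H(T_i - C_i^L/\theta_i^L) \geq C_i^H - C_i^L$; Equations~\eqref{eqn:co_nwc1}, \eqref{eqn:co_nwc2}, \eqref{eqn:transition3}, \eqref{eqn:transition4} reduce uniformly to $\theta_i^L \leq \theta_i^H$ and $\theta_i^H \geq u_i^H$; Equations~\eqref{eqn:transition1} and \eqref{eqn:transition2} are trivially satisfied since all $w_j = 0$ and all HI-mode rates coincide; and Proposition~\ref{prop:hi_feas} reduces in both cases to $\sum_{\tau_i \in \tau_H} \theta_i^H \leq m$. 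Thus the degenerate assignment lies inside the feasible region of SOMA's convex program no matter how the $k_i$ are fixed, so the program is feasible.

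Next I would argue about the Step~5 check. Since SOMA fixes $\theta_i^L = u_i^L$ for LO-tasks, it is enough to bound $\sum_{\tau_i \in \tau_H} \theta_i^L$. The assumed degenerate assignment satisfies Proposition~\ref{prop:lo_feas}, which gives $\sum_{\tau_i \in \tau_H} \theta_i^L \leq m - \sum_{\tau_i \in \tau_L} u_i^L$ (using that $\theta_j^L \geq u_j^L$ for LO-tasks in any feasible assignment). Because SOMA's objective is exactly $\sum_{\tau_i \in \tau_H} \theta_i^L$, the minimizer it returns cannot have a larger value than this degenerate feasible point. Adding back the LO-task contribution $\sum_{\tau_i \in \tau_L} u_i^L$, the returned assignment automatically satisfies $\sum_{\tau_i \in \tau} \theta_i^L \leq m$, so SOMA passes the check and declares success. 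Combining this with the reduction from Lemma~\ref{lemma:speed-up} yields the speed-up bound of $4/3$.

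The main obstacle, and the place where care is needed, is the case analysis that verifies every constraint of the SOMA program under the degenerate assignment, because the constraints are written in terms of the fixed $k_i = i$ and not in terms of the definitional $k_i$ (which would be $n_H + 1$ when all $w_j = 0$). The key insight that removes the obstacle is that once $w_j = 0$ and all HI-mode rates collapse to $\theta_i^H$, the summations and window-indexed conditions become vacuous or uniformly equivalent to the dual-rate conditions, so the mismatch between the hard-coded $k_i$ and the definitional $k_i$ does not matter. The rest of the argument is a direct appeal to Lemma~\ref{lemma:speed-up}.
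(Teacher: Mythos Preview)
Your proposal is correct and follows essentially the same route as the paper: invoke Lemma~\ref{lemma:speed-up}, then verify that the degenerate assignment ($\theta_{i,j}^H=\theta_i^H$, $w_j=0$) lies in the feasible region of the convex program regardless of the hard-coded $k_i$ values, by collapsing each constraint of Theorems~\ref{thm:co_hi_mode} and~\ref{thm:transition_hi_mode} to its dual-rate form. The one place where you go further than the paper is the handling of the Step~5 check: the paper simply asserts that feasibility of the optimization ``combined with the check for Proposition~\ref{prop:lo_feas}'' finishes the proof, whereas you make the (necessary) observation that the objective $\sum_{\tau_i\in\tau_H}\theta_i^L$ at the minimizer is bounded above by its value at the degenerate feasible point, and then add back $\sum_{\tau_i\in\tau_L}u_i^L$ to conclude $\sum_{\tau_i\in\tau}\theta_i^L\le m$. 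That extra sentence actually closes a small gap the paper leaves implicit, so your version is slightly more complete but not a different approach.
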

\begin{proof} 
From Lemma~\ref{lemma:speed-up} we know that any algorithm that always return some feasible assignment if there exists an assignment satisfying the following three conditions has a speed-up bound of $4/3$.
\begin{enumerate}
\item $\forall i: 1\leq i \leq n_H, \forall j: 1\leq j \leq n_H, \theta_{i,j}^H = \theta_i^H$,
\item $\forall j: 1 \leq j \leq n_H, w_j = 0$, and
\item Propositions~\ref{prop:lo_sched}--\ref{prop:hi_feas} and Theorems~\ref{thm:co_hi_mode} and~\ref{thm:transition_hi_mode}.
\end{enumerate}

To prove this lemma, it is then sufficient to show that such an assignment is considered by SOMA. Observe that the boundary constraints on execution rates and window durations in the optimization problem allow an assignment that satisfies the above three conditions.

The constraints of the optimization problem exactly encode Propositions~\ref{prop:lo_sched}--~\ref{prop:hi_feas} and Theorems~\ref{thm:co_hi_mode} and~\ref{thm:transition_hi_mode} of the schedulability test. 

Now we show that for an assignment satisfying the above conditions, the earliest completion window parameter $k_i$ is irrelevant. 

In Theorem~\ref{thm:co_hi_mode}, Equation~\eqref{eqn:co_nwc1} is subsumed by Equation~\eqref{eqn:co_nwc2}. That is, when condition 1 holds $\forall j: k_i \leq j \leq n_H, \theta_i^L \leq \theta_{i,j}^H$ is subsumed by $\theta_i^L \leq \theta_i^H$.

Also, Equation~\eqref{eqn:co_wc} reduces to $\theta_i^H (T_i - C_i/\theta_i^L) \geq C_i^H-C_i^L$. Substituting $R_i = \theta_i^H$ (by condition 1) and condition 2 in Equation~\eqref{eqn:co_wc} we get,
\begin{align*}
\sum_{j: 1 \leq j < k_i} \theta_{i,j}^H \times w_j + R_i \times \left (T_i - C_i^L/\theta_i^L - \sum_{j: 1 \leq j < k_i} w_j \right) &\geq C_i^H - C_i^L\\
\Rightarrow \theta_i^H (T_i - C_i/\theta_i^L) &\geq C_i^H-C_i^L
\end{align*}
Therefore, Theorem~\ref{thm:co_hi_mode} is independent of $k_i$ values. 

Similarly, in Theorem~\ref{thm:transition_hi_mode}, Equation~\eqref{eqn:transition3} is subsumed by Equation~\eqref{eqn:transition4}. That is, $\forall j: k_i \leq j \leq n_H, \theta_{i,j}^H \geq u_i^H$ becomes $\theta_i^H  \geq u_i^H$ when condition 1 holds.

Equation~\eqref{eqn:transition2}, $\forall j: 1\leq j < k_i, \theta_{i,j}^H \leq \theta_{i,j+1}^H$,  is satisfied because all the transition window rates for each task are identical (i.e., condition 1), and Equation~\eqref{eqn:transition1} is also satisfied because both the sides equate to $0$ (since $w_j=0$, by condition 2). Thus, even this theorem is independent of $k_i$ values. 

Hence the $k_i$ values are irrelevant for an assignment that satisfies the above conditions. Thus, any $k_i$ values, including the ones fixed by SOMA, can be used in the optimization problem.

Thus, the optimization problem is guaranteed to find the assignment that satisfies the above three conditions. This combined with the check for Proposition~\ref{prop:lo_feas} in Step~$5$ of SOMA proves the lemma.\qed
\end{proof}

Algorithm SOMA has a speed-up bound of $4/3$ by Lemma~\ref{cor:soma_speed_up}, and by the argument that no non-clairvoyant algorithm can have a speed-up bound smaller than $4/3$ proves the speed-up optimality of SOMA.

\paragraph{Algorithm Complexity.} SOMA can determine the values for $k_i$s in linearithmic time ($O(n_H log (n_H))$). The optimization has $n_H^2+3n_H$ real variables, $n_H$ integer variables and $9 \times n_H$ constraints. Hence, the number of variables and constraints is polynomially bounded in the number of HI-tasks in the system. The complexity of SOMA is therefore bounded by the complexity of solving the convex optimization problem. By replacing $C_i^L/\theta_i^L$ in the optimization problem with a variable $x_i$, it is easy to see that the problem reduces to a convex optimization problem with objective of the form $\sum_i 1/x_i$ and all linear constraints. This is one of the simplest convex optimization problems, and we plan to investigate a polynomial time algorithm for solving it in future work.

\paragraph{Theoretical performance and experimental evaluation.} Scheduling algorithms are evaluated based on either analytical performance bound or experimental evaluation. Scheduling algorithms with good theoretical performance such as low speed-up bound and good performance in experimental evaluation are usually preferred. Speed-up bound is a good metric to compare the worst-case performance of different scheduling algorithms. For scheduling algorithms such as partitioned scheduling algorithms that are based on heuristics, it is hard to determine these speed-up bounds. The performance of these algorithms is often evaluated based on experimental evaluation. 

The dual-rate fluid model is shown to be speed-up optimal~\cite{baruah_mcf} and has good performance in experimental evaluation for multiprocessor task systems~\cite{lee_mcfluid}. Though the dual-rate model is optimal in terms of speed-up bound, it is not optimal in terms of schedulability~\cite{ramanathan_wmc}. There are several feasible task systems that are deemed unschedulable by MC-Fluid~\cite{lee_mcfluid} - the optimal dual-rate assignment algorithm. Thus, we propose SOMA the multi-rate and window assignment algorithm which has an optimal speed-up bound and performs better than MC-Fluid in experimental evaluation. In fact, our experimental evaluation together with Lemma 3 also shows that SOMA strictly dominates MC Fluid in terms of schedulability. That is, all task systems deemed schedulable by MC Fluid are also deemed so by SOMA, and there exist task systems that are not schedulable by MC Fluid but are deemed so by SOMA.

To provide some insights on the complexity of these two rate assignment algorithms - SOMA and MC-Fluid - we compare their offline and runtime complexity. The rate assignment in both these algorithms is done offline when determining the execution rates. The offline complexity of MC-Fluid rate assignment algorithm is $O(n_H^2)$. Whereas, the offline complexity of SOMA is larger compared to MC-Fluid because the complexity of SOMA is bounded by the complexity of solving the optimization problem defined in Definition~\ref{def:mr_opt}. For details on the complexity of SOMA refer to the complexity discussion above. During runtime, the scheduler executes the task with the assigned execution rates. The runtime scheduling mechanism of SOMA is quite similar to MC-Fluid because the $n_H$ window durations that are determined offline remain fixed during runtime. The runtime complexity of SOMA is not significantly high compared to dual-rate model because the only additional overhead incurred is in the transition period ($n_H$ windows) when the mode switch is triggered. Essentially, upon mode switch in the dual-rate model, HI-tasks switch their execution rate at most once. Whereas, in the case of multi-rate model, each HI-task switches its execution rate at most $n_H+1$ times.
\section{Experiments and Results}
\label{sec:experiments}

In this section we evaluate the schedulability performance of SOMA and compare it with other scheduling algorithms that have known speed-up bounds. These include GLO-EDF\textunderscore VD~\cite{li_globalmc}, PAR-EDF\textunderscore VD~\cite{baruah_multicoremc} and MC-Fluid~\cite{lee_mcfluid}. We compare with only these algorithms because our focus is on designing an algorithm that has a worst-case performance guarantee. We do not compare with the dual-rate strategy MCF~\cite{baruah_mcf} because MC-Fluid is known to be dual-rate optimal.

\paragraph{Task set generation.} Our experiments are carried out using the task set generator proposed in~\cite{ramanathan_waters}. The task set parameters used in our generator are described as follows:
 
\begin{enumerate}
\item $m \in$\{$2,4,8$\} denotes the number of cores.
\item $u_{min}$ ($= 0.001$) and $u_{max}$ ($= 1.0$) denote the minimum and maximum individual task utilization respectively.
\item $U_B = max (U_H^H, U_H^L+U_L^L)/m$ denotes the normalized system utilization in both LO- and HI-modes. We consider $U_B \in[0.50, 0.55, \ldots, 1.0]$.
\item $U_H^H/m\in[0.1, 0.15, \ldots, 1.0]$ denotes the normalized system utilization in HI-mode.
\item $U_H^L/m\in[0.05, 0.10, \ldots, U_H^H/m]$ denotes the normalized system utilization HI-tasks in LO-mode.
\item $U_L^L/m\in[0.05, 0.10, \ldots, 1-U_H^L/m]$ denotes the normalized system utilization of LO-tasks in LO-mode.
\item Total number of tasks is lower bounded by $m+1$ and upper bounded by $10*m$.
\item Total number of HI-tasks in the system is lower bounded by $m+1$ and upper bounded by $3*m$. Note that the performance of SOMA is identical to MC-Fluid for values of $n_H \leq m$. This is expected because if such a task set is dual-rate infeasible, it means that even allocating the maximum rate of $1$ to each task in the HI-mode is not sufficient.
\item $P_H \in [0.1, 0.2, \ldots, 0.9]$ denotes the percentage of HI-tasks in the system.
\item $T_i$, the period of task $\tau_i$ is drawn uniformly at random from $[5, 100]$.
\item Task utilizations $u_i^L$  and $u_i^H$ are determined using techinques \emph{MRandFixedSum}~\cite{emberson_mrand} and \emph{BoundedUniform}~\cite{ramanathan_waters}
\item The execution requirements $C_i^L$ and $C_i^H$ are defined as $u_i^L \times T_i$ and $u_i^H \times T_i$ respectively.
\end{enumerate}

Using the above procedure we generate atleast $1000$ task sets for each value of $U_B$ and $m$. For each such successfully generated task set we evaluate the performance of the SOMA with other multi-core MC scheduling algorithms with known speed-up bound.

\begin{figure}
	    \centering
	    \subfigure[$m=2$]{
            	\includegraphics[width=0.75\textwidth,height=5cm]{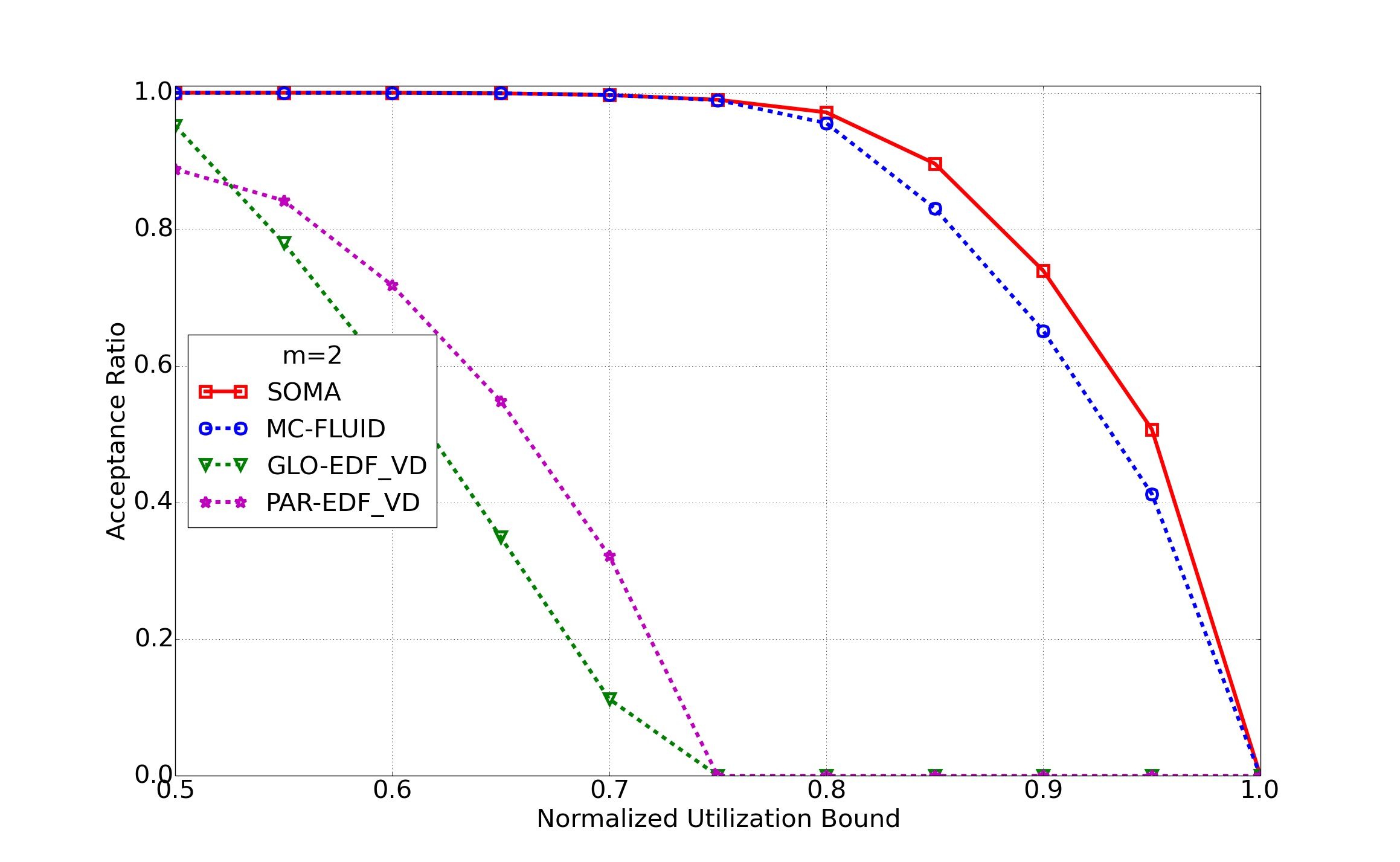}
            	\label{fig:sched_m2}
	     }
	     \subfigure[$m=4$]{
            	\includegraphics[width=0.75\textwidth,height=5cm]{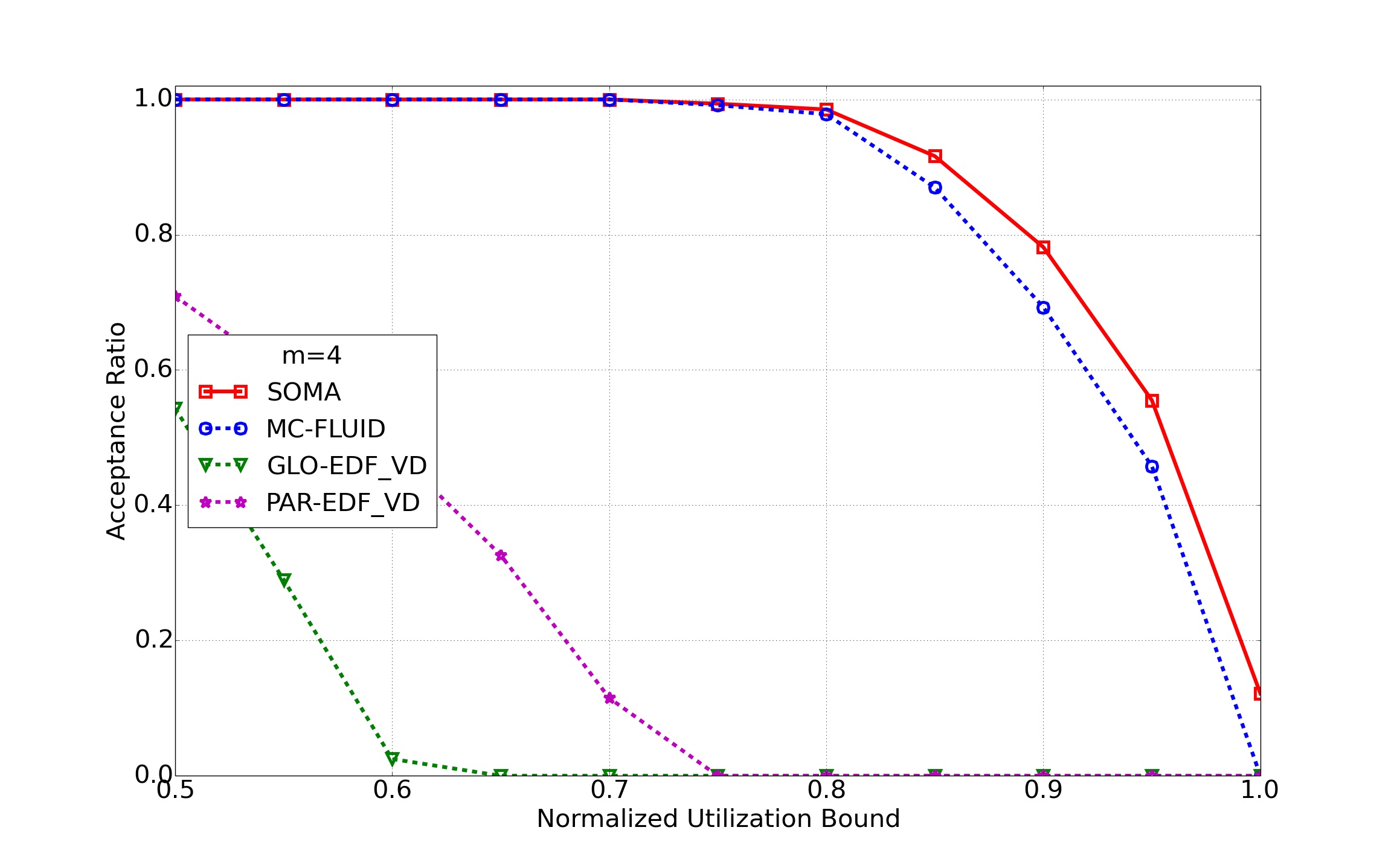}
            	\label{fig:sched_m4}
	     }
	     \subfigure[$m=8$]{
	     \includegraphics[width=0.75\textwidth,height=5cm]{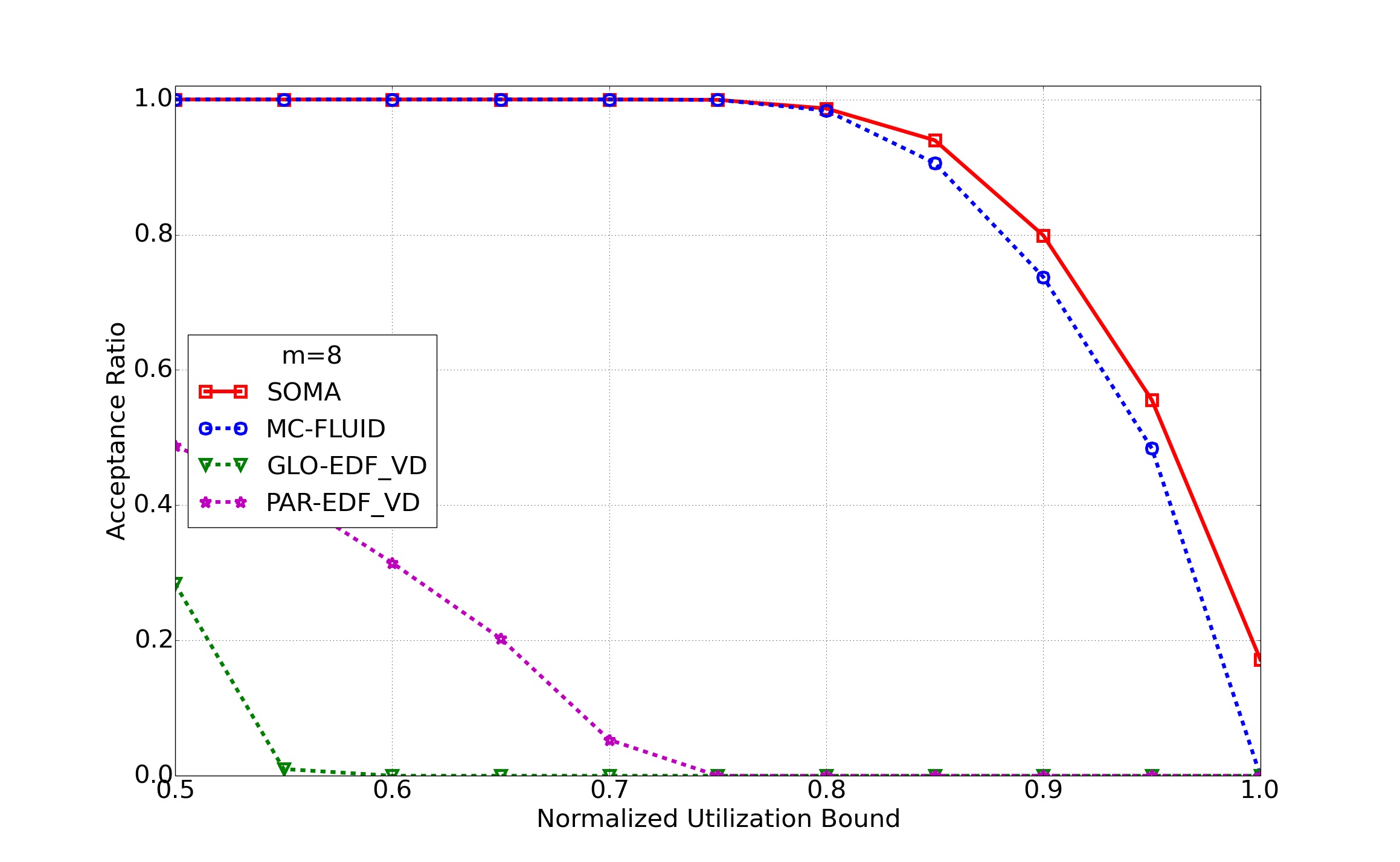}
	     \label{fig:sched_m8}
	     	     }
\caption{Comparison of acceptance ratio for varying $m$}
\label{fig:schedulability}
\end{figure}

\paragraph{Results.} Figure~\ref{fig:schedulability} shows the overall performance of the algorithms where we plot the acceptance ratio, i.e., fraction of generated task sets that are deemed schedulable, versus normalized utilization $U_B$ for varying $m\in[2,4,8]$. Each data point in the plot corresponds to $1000$ task sets. As can be seen, SOMA clearly dominates all the other algorithms. It significantly outperforms GLO-EDF\textunderscore VD and PAR-EDF\textunderscore VD, and also has better performance than MC-Fluid. Further, the performance gap between SOMA and MC-Fluid increases with $m$ at high $U_B$ values mainly because the number of tasks in the task sets are also higher.

\begin{figure}
\centering
\includegraphics[width=0.75\textwidth,height=5cm]{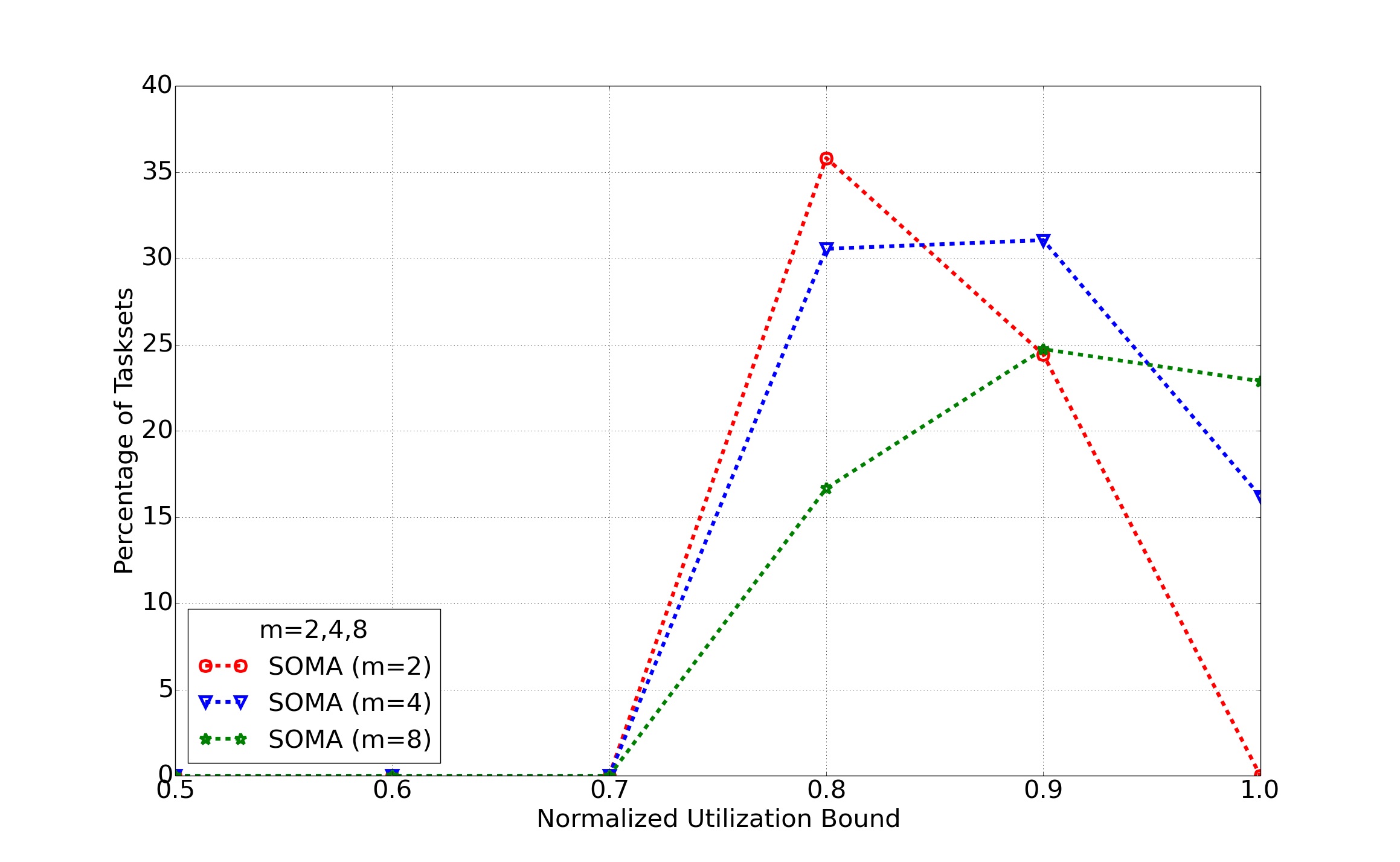}
\caption{Percentage of dual-rate infeasible task sets that are schedulable under multi-rate}
\label{fig:bruteforce}
\end{figure}

In the remainder of this section we take a closer look at the performance of SOMA in comparison to MC-Fluid.
Since an exact feasible test for MC task systems is not known ($U_B \leq 1$ is only necessary but not sufficient), it is difficult to exactly characterize the performance gap between SOMA and MC-Fluid. Nevertheless, to provide some insights into the significance of this gap, we show the percentage of \emph{dual-rate infeasible task sets} that are deemed to be schedulable under SOMA in Figure~\ref{fig:bruteforce}. Due to speed-up optimality of MC-Fluid (speed-up of $4/3$), there are very few feasible task sets with $U_B \leq 0.75$ that are unschedulable under MC-Fluid. There is no difference in performance between SOMA and MC-Fluid for $U_B \leq 0.75$, because the two algorithms have optimal performance for $U_B\leq0.75$. 
As expected, we can see that SOMA performs better for $U_B \geq 0.80$. For $m=2$, as much as $35.8\%$ more task sets are deemed to be schedulable under SOMA. Thus, there are significant number of task sets that are schedulable under SOMA but not under MC-Fluid (about $358$ out of $1000$ when $U_B = 0.80$). We believe this is a significant improvement over the state-of-art algorithm with known speed-up bound. The reason why this performance gap is not explicitly visible in Figure~\ref{fig:schedulability} is because we plot the percentage of schedulable task sets inclusive of both the dual-rate feasible and dual-rate infeasible task sets. Whereas, in Figure~\ref{fig:bruteforce} we only consider the dual-rate infeasible task sets. For $m=2$, when $U_B = 0.80$ the acceptance ratio of SOMA and MC-Fluid in Figure~\ref{fig:schedulability} is $97.1\%$ and $95.5\%$ respectively. The performance gap between SOMA and MC-Fluid seems to be only $1.6\%$ ($97.1\%-95.5\%$). But, $1.6\%$ out of the $4.5\%$ ($100\%-95.5\%$) dual-rate infeasible task sets are schedulable by SOMA. This translates to a performance improvement of $35.5\%$ ($1.6/4.5 \times 100\%$) over the MC-Fluid which is consistent with Figure~\ref{fig:bruteforce}. It is also important to note that since exact feasibility tests are not known for mixed-criticality systems, the task sets generated in the evaluation are not necessarily all feasible. In fact, it is reasonable to expect that the number of feasible task sets among those generated in the evaluation would decrease with increasing $m$ and utilization. For $m=4$ and $m=8$, the performance gap between SOMA and MC-Fluid increases gradually as $U_B$ increases. This is expected because the number of HI-tasks in the task sets are also higher for same value of $U_B$.

\begin{figure}[h!]
	    \centering
	    \subfigure[$m=2$]{
           	\includegraphics[width=0.75\textwidth,height=5cm]{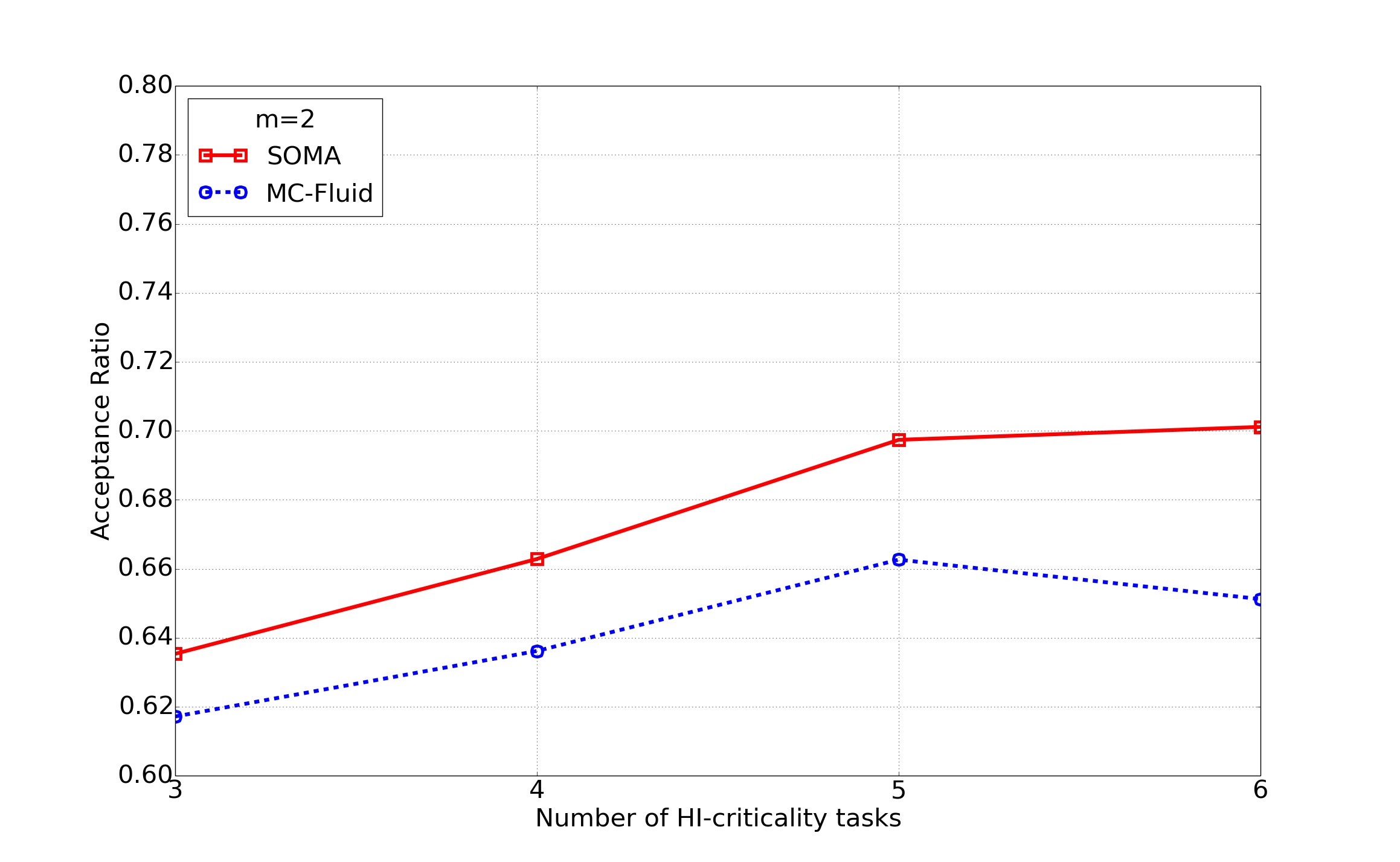}
           	\label{fig:hitasks}
	     }
	     \subfigure[$m=4$]{
            	\includegraphics[width=0.75\textwidth,height=5cm]{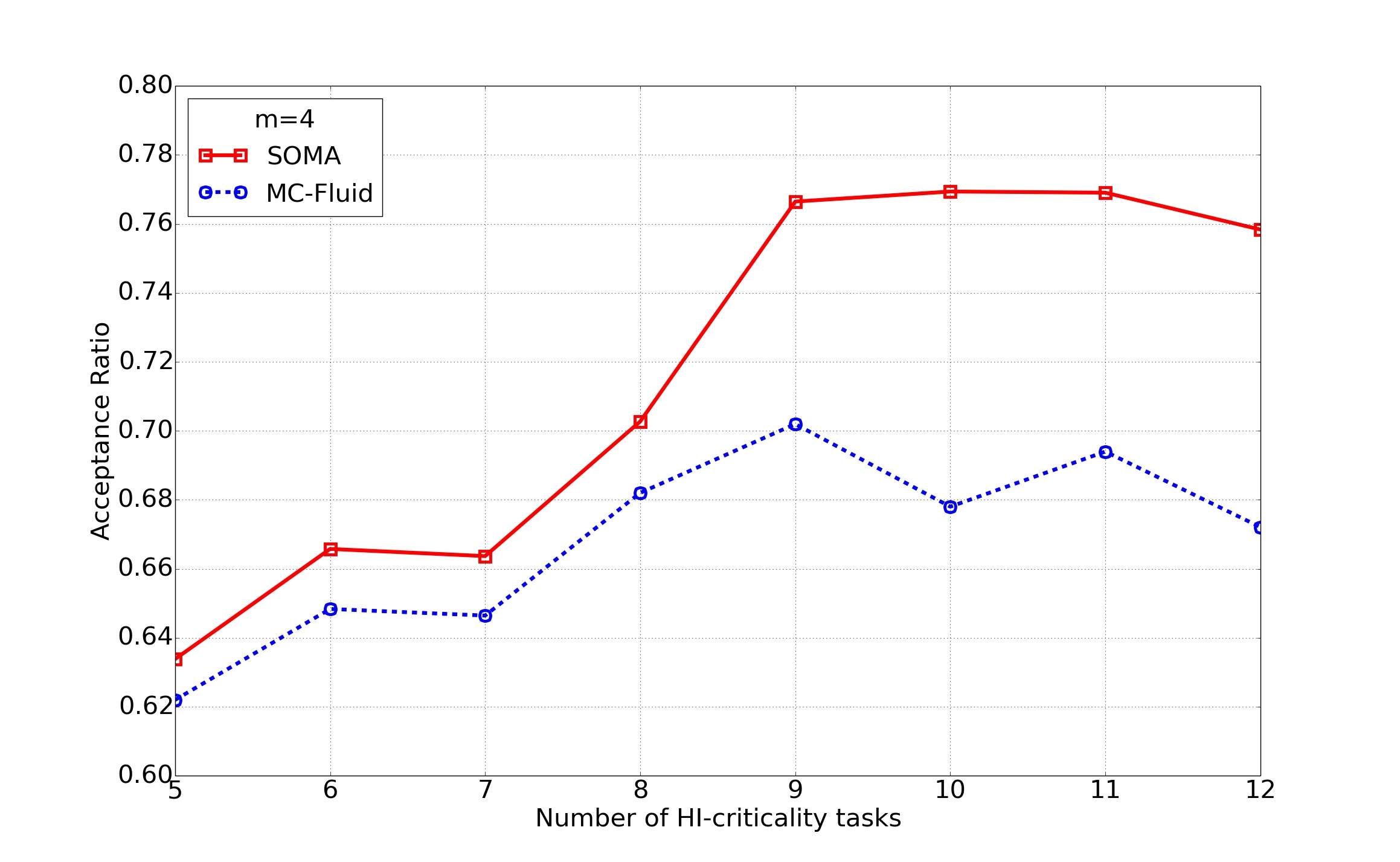}
           	\label{fig:hitasks2}
	     }
	     \subfigure[$m=8$]{
            	\includegraphics[width=0.75\textwidth,height=5cm]{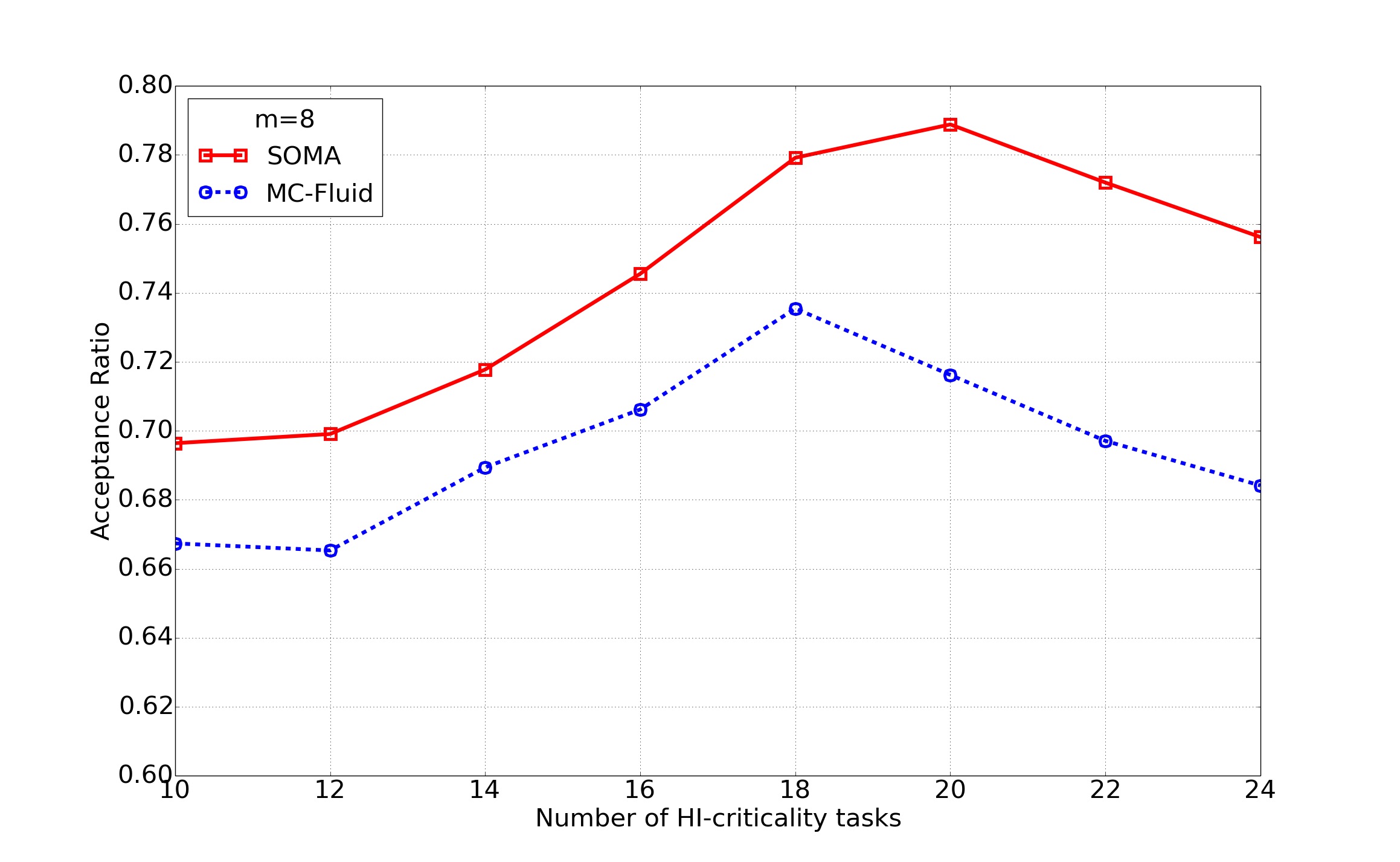}
           	\label{fig:hitasks8}
	     }
\caption{Weighted Acceptance ratio with varying $n_H$}
\label{fig:acceptance_ratio_nH}
\end{figure}

Next, we present the weighted acceptance ratios for varying values of task set parameters $P_H$ and $u_{max}$. Weighted Acceptance Ratio for an algorithm is defined as $WAR(\mathbb{S}) = \frac{\underset{U_B\in\mathbb{S}}{\sum}(AR(U_B) X U_B)}{\underset{U_B\in\mathbb{S}}{\sum} U_B}$, where $\mathbb{S}$ is the set of $U_B$ values and AR($U_B$) is the acceptance ratio of that algorithm for a specific value of $U_B$. This metric essentially gives more importance to task sets with higher $U_B$ values, consistent with the fact that such task sets are in general harder to schedule. Task sets are generated using the procedure described in Sect.~\ref{sec:experiments}. The results of these task sets are combined based on their normalized utilization $U_B$.

An important factor in MC fluid scheduling is the number of HI-tasks ($n_H$). Therefore, in Figure~\ref{fig:hitasks} we compare the weighted acceptance ratios of SOMA and MC-Fluid for varying $n_H$ and $m$. As $n_H$ increases the performance gap between SOMA and MC-Fluid also increases. This is expected because for larger $n_H$ at the same $U_B$ value, there is more flexibility in assigning multiple execution rates in the HI-mode.

\begin{figure}
	    \centering
	    \subfigure[Varying percentage of HI-tasks($P_H$)]{
            	\includegraphics[width=0.75\textwidth,height=5cm]{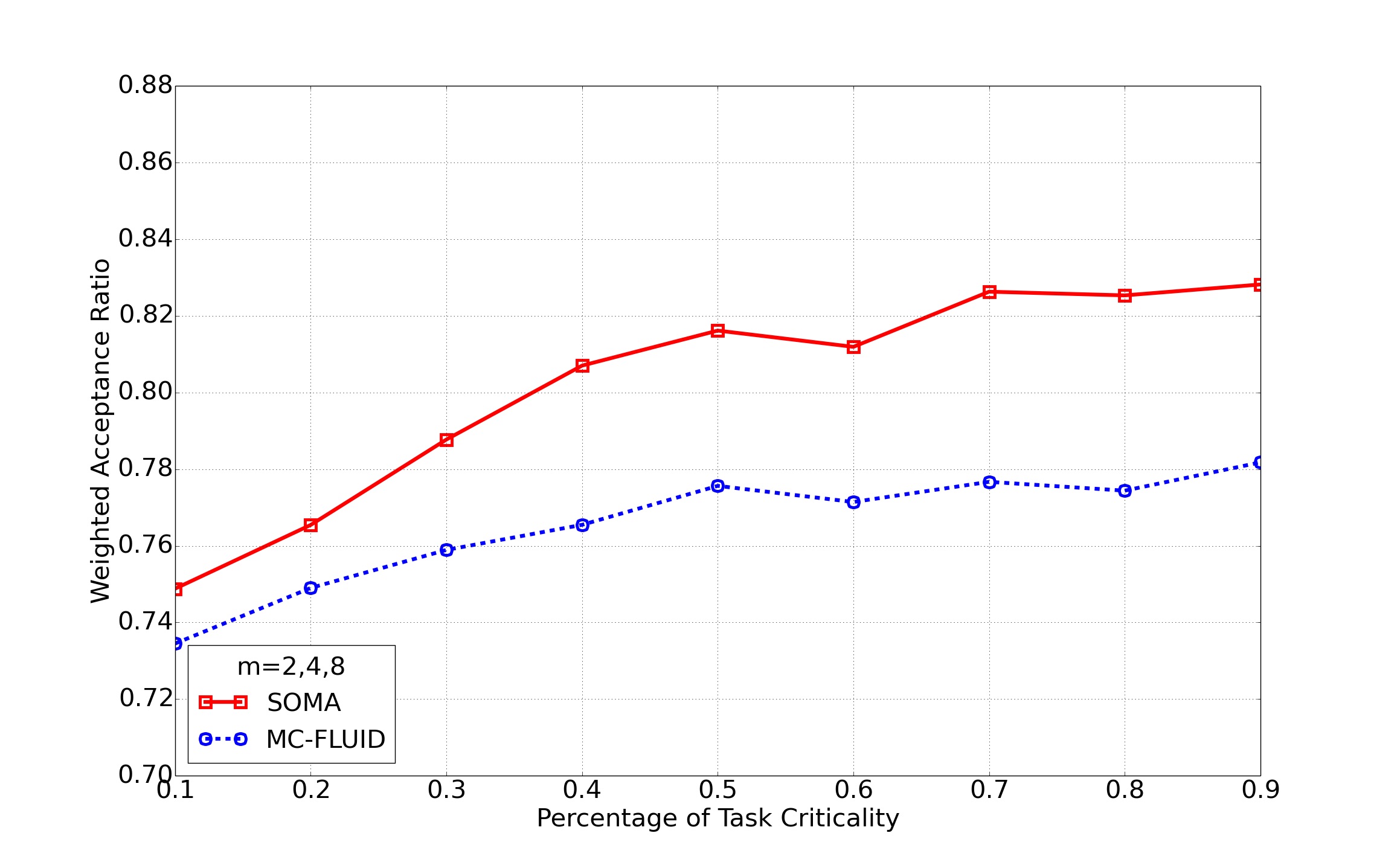}
            	\label{fig:probability}
	     }
	     \subfigure[Varying maximum HI-utilization of HI-tasks ($max\{u_i^H\}$)]{
            	\includegraphics[width=0.75\textwidth,height=5cm]{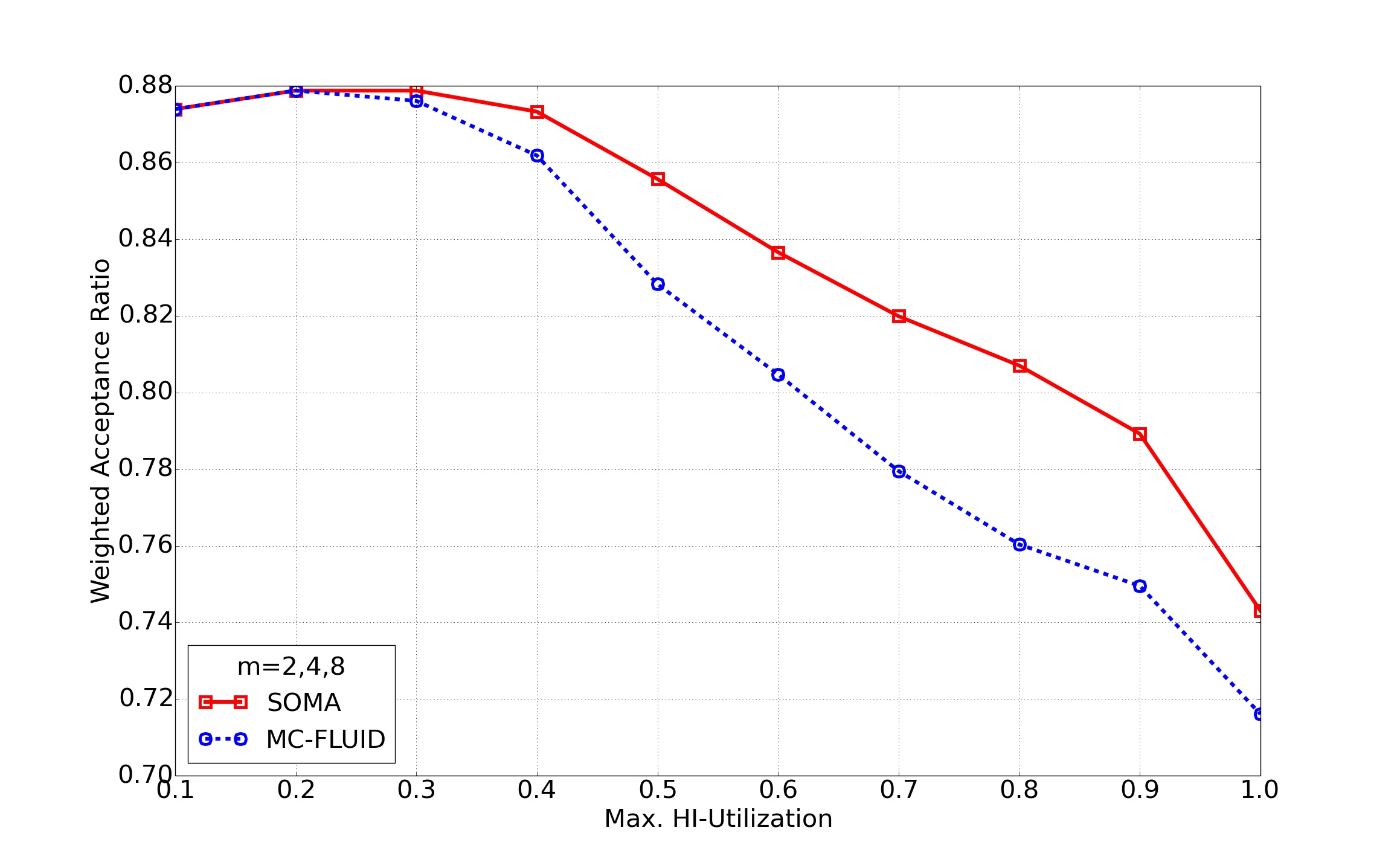}
            	\label{fig:utilization}
	     }
\caption{Comparison of weighted acceptance ratio}
\label{fig:weighted_schedulability}
\end{figure}
In Figure~\ref{fig:probability} we compare the weighted acceptance ratios of SOMA and MC-Fluid for varying $P_H$ values. The performance gap widens as the number of HI-tasks in a task set increases, indicating that SOMA performs better when there are more number of HI-tasks in the system. This is as expected because the main benefit of assigning multiple rates to HI-tasks in the HI-mode is that the total rate for these tasks in the LO-mode can be reduced. This reduction is useful only when there are LO-tasks that can benefit from it. At high $P_H$ values, there are very few LO-tasks and as a result, all these tasks will be benefitted.

In Figure~\ref{fig:utilization} we compare the weighted acceptance ratios of SOMA and MC-Fluid for varying $max\{u_i^H\}$ values. The performance of both algorithms decrease with increasing $max\{u_i^H\}$, this is reasonable because typically the number of tasks in each task set reduces with increasing $max\{u_i^H\}$ and as shown in Sect.~\ref{sec:experiments}, SOMA tends to perform better with increasing $n_H$. When $max\{u_i^H\}$ is large, there is a significant performance gap between SOMA and MC-Fluid. This is because in such task systems there are typically more number of heavy utilization tasks. As a result, the minimum required HI-mode execution rate under MC-Fluid is high and hence it does not have much flexibility. However, under SOMA, some execution rates in the HI-mode can be much lower than $u_i^H$ as long as the average rate over all the transition windows is reasonably high. This flexibility in rate assignment is a key property of SOMA and its benefit is clearly visible for task sets with high $max\{u_i^H\}$ values.

In Figure~\ref{fig:system_utilization} we compare the weighted acceptance ratios of SOMA and MC-Fluid for varying $U_L^L/m$ values with $U_B = 0.95$. The performance gap widens as $U_L^L/m$ increases, indicating that SOMA performs better when the system utilization of LO-tasks is high. This is reasonable because SOMA tries to minimize the total LO-mode execution rate of HI-tasks and as a result, this reduction becomes more useful as $U_L^L/m$ increases.
\begin{figure}
\centering
\includegraphics[width=0.75\textwidth,height=5cm]{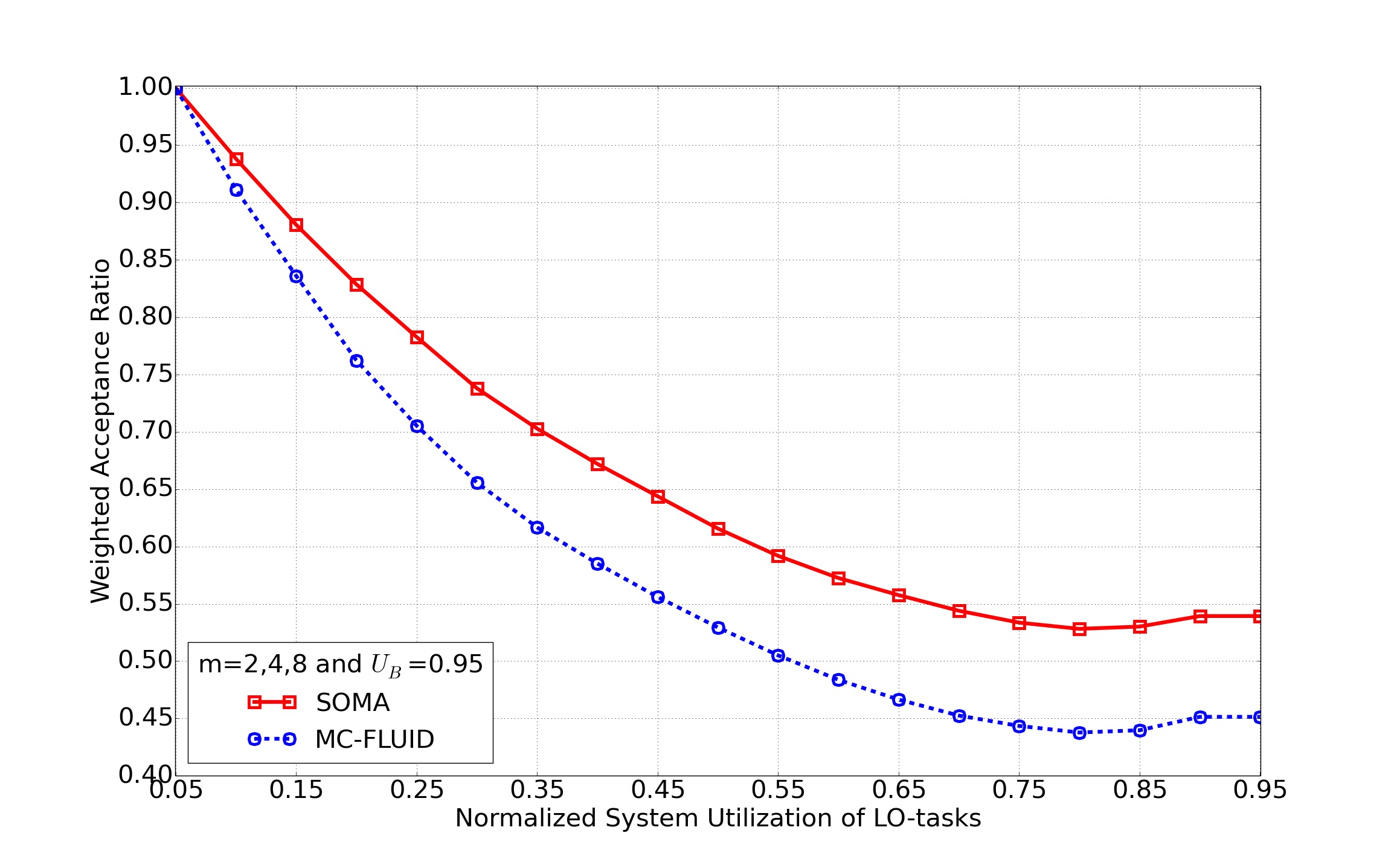}
\caption{Varying normalized system utilization of LO-tasks ($U_L^L/m$)}
\label{fig:system_utilization}
\end{figure}

The presented experiments show that SOMA outperforms all the existing algorithms with known speed-up bounds. The overall performance gap between SOMA and MC-Fluid increases with utilization. Further, we have also shown that this gap is significant in terms of the percentage of dual-rate infeasible task sets that SOMA can schedule, and that the gap widens with increasing number of HI-tasks in the system.
\section{Summary}
\label{sec:summary}

Scheduling algorithms with good speed-up bounds as well as good performance in schedulability experiments are highly desirable. In this paper, we addressed the shortcoming in the schedulability performance of the speed-up optimal dual-rate fluid model by proposing the multi-rate fluid model. We derived a sufficient schedulability test for the multi-rate model and also showed that it dominates the dual-rate model.
Finally, we presented a speed-up optimal algorithm to compute the execution rates and window durations for the multi-rate model, and showed through experiments that it outperforms all the other MC scheduling algorithms with known speed-up bounds. 

As part of future work, we plan to derive an optimal strategy for computing the execution rates and window durations in the multi-rate model that runs in polynomial time.
Extending the proposed multi-rate model to systems with multiple criticality levels is another research direction that we plan to explore.
\section*{Acknowledgement}

We would like to thank Jaewoo Lee for providing motivation for this work through discussions on the sub-optimality of dual-rate fluid scheduling.

\bibliographystyle{spmpsci}      
\bibliography{all}   

%
%
\end{document}